\documentclass[11pt,letterpaper]{article}

\bibliographystyle{plain}

\usepackage[top=1in, bottom=1in, left=1in, right=1in]{geometry}

\usepackage{amsmath}
\usepackage{amsthm}
\usepackage{amsfonts}
\usepackage{amssymb}
\usepackage{bbm}
\usepackage{mathabx} 
\usepackage{mathrsfs}
\usepackage{graphicx}
\usepackage{color}
\usepackage[inline]{enumitem}
%
%



\bibliographystyle{apalike}

\newcommand{\BIF}{\textup{B},\infty,\texttt{F}}
\newcommand{\BIO}{\textup{B},\infty,1}
\newcommand{\E}[1]{\textup{E}\big( #1 \big)}
\newcommand{\hE}[1]{\hat{\textrm{E}}\big( #1 \big)}
\newcommand{\bH}{\mb{H}}
\newcommand{\bHhat}{\hat{\mb{H}}}
\newcommand{\res}[2]{\big[ #1 \big]_{#2}}

\newcommand{\T}{\intercal}
\newcommand{\supp}{\textup{S}_i}
\newcommand{\ind}{\textup{ind}}
\newcommand{\budget}{\textup{b}}
\newcommand{\prob}{\textup{Pr}}

\newcommand{\real}{\mathbb{R}}
\newcommand{\graph}{\mathcal{G}}
\newcommand{\nodes}{\mathcal{V}}
\newcommand{\edges}{\mathcal{E}}
\newcommand{\actions}{\mathcal{A}}
\newcommand{\game}{\mathscr{G}}
\newcommand{\nash}{\textup{NE}}

\newcommand{\f}{\textup{vec}}

\newcommand{\mb}[1]{\mathbf{#1}}
\newcommand{\minus}{\text{\normalfont -}}
\newcommand{\seq}[1]{\{ 1,\dots, #1 \}}
\newcommand{\inner}[2]{\langle #1 \,, #2 \rangle}

\newtheorem{theorem}{Theorem} 
 
\newtheorem{lemma}{Lemma}
\newtheorem{assumption}{Assumption}

%

%
\title{Provable Computational and Statistical Guarantees for Efficient Learning of Continuous-Action Graphical Games}
\date{}

%

\author{%
	Adarsh Barik \\
	Department of Computer Science\\
	Purdue University\\
	West Lafayette, Indiana, USA\\
	\texttt{abarik@purdue.edu} \\
	\and
	Jean Honorio \\
	Department of Computer Science \\
	Purdue University \\
	West Lafayette, Indiana, USA\\
	\texttt{jhonorio@purdue.edu} \\
}

\begin{document}

\maketitle

\begin{abstract}
  In this paper, we study the problem of learning the set of pure strategy Nash equilibria and the exact structure of a continuous-action graphical game with quadratic payoffs by observing a small set of perturbed equilibria. A continuous-action graphical game can possibly have an uncountable set of Nash euqilibria. We propose a $\ell_{12}-$ block regularized method which recovers a graphical game, whose Nash equilibria are the $\epsilon$-Nash equilibria of the game from which the data was generated (true game). Under a slightly stringent condition on the parameters of the true game, our method recovers the exact structure of the graphical game. Our method has a logarithmic sample complexity with respect to the number of players. It also runs in polynomial time.
\end{abstract}

\section{INTRODUCTION}
\label{sec:introduction}

The real world is filled with scenarios which arise due to competitive actions by selfish individual players who are trying to maximize their own utilities or payoffs. Non-cooperative game theory has been considered as the appropriate mathematical framework to formally study \emph{strategic} behavior in such multi-agent scenarios. In such scenarios, each agent decides its action based on the actions of other players. The core solution concept of \emph{Nash equilibrium (NE)}~\cite{Nash51} serves a descriptive role of the stable outcome of the overall behavior of self-interested agents (e.g., people, companies, governments, groups or autonomous systems) interacting strategically with each other in distributed settings.

\paragraph{Graphical Games.}

The introduction of compact representations to game theory over the last two decades have extended algorithmic game theory's potential for large-scale, practical applications often encountered in the real world. Introduced within the AI community about two decades ago, \emph{graphical games}~\cite{Kearns01} constitute an example of one of the first and arguably one of the most influential graphical models for game theory. Indeed, graphical games played a prominent role in establishing the computational complexity of computing NE in normal-form games as well as in succinctly representable multiplayer games (see, e.g.,~\cite{Daskalakis06,Daskalakis09,Daskalakis09b} and the references therein). 

Players can take actions in either a discrete space (for example in voting) or in a continuous space (for example in simultaneous auctions in online advertising). Correspondingly, graphical games can be studied in both domains. In this paper, we focus on continuous-action graphical games.

\paragraph{Inference in Graphical Games.}

There has been considerable progress on \emph{computing} classical equilibrium solution concepts such as NE and \emph{correlated equilibria}~\cite{Aumann74} in graphical games (see, e.g.,~\cite{Blum06,Kearns01,Kakade03,Ortiz02,Papadimitriou08,Vickrey02} and the references therein) as well as on computing the \emph{price of anarchy} in graphical games (see, e.g.,~\cite{Benzwi11}). \cite{Irfan14} identified the most influential players, i.e., a small set of players whose collective behavior forces every other player to a unique choice of action.  All the work above focus on inference problems for graphical games, and fall in the field of algorithmic game theory.

\paragraph{Learning Graphical Games.}

The aforementioned problems of computing Nash equilibrium, correlated equilibrium or price of anarchy often assume that the structure and payoffs of the games under consideration are already available. Relatively less attention has been paid to the problem of \emph{learning} both the structure and payoffs of graphical games from data. Addressing this problem is essential to the development, potential use and success of game-theoretic models in practical applications. In this paper, we study the problem of learning the complete characterization of pure strategy Nash equilibrium and structure of the graph in a continuous-action graphical game.

\paragraph{Related Work.}

There has been considerable amount of work done for learning games in the discrete-action setting. \cite{Honorio15} proposed a maximum-likelihood approach to learn “linear influence games” - a
class of parametric graphical games with binary actions and linear payoffs.  However, their method runs in exponential time and the authors assumed a specific observation model for the strategy profiles. For the same specific observation model, \cite{Ghoshal16} proposed a polynomial time algorithm, based on $\ell_1$-regularized logistic regression, for learning linear influence games. Their strategy profiles (or joint actions) were drawn from a mixture of uniform distributions: one over the pure-strategy Nash equilibria (PSNE) set, and the other over its complement. \cite{Ghoshal17} obtained necessary and sufficient conditions for learning linear influence games under arbitrary observation model. \cite{Garg16} use a discriminative, max-margin based approach, to learn tree structured polymatrix games\footnote{Polymatrix games are graphical games where each player's utility is a sum of unary (single player) and pairwise (two players) potential functions.}. Their method runs in exponential time and the authors show that learning polymatrix games is NP-hard under this max-margin setting, even when the class of graphs is restricted to trees. Finally, \cite{Ghoshal18} proposed a polynomial time algorithm for learning sparse polymatrix games in the discrete-action setting.

Regarding inference for continuous-action games, \cite{facchinei2007generalized} and \cite{scutari2010convex} provide a survey of variational inequality methods and Gauss-Seidel methods to compute generalized Nash equilibrium for pure strategy games. \cite{perkins2013asynchronous} and \cite{perkins2015mixed} studied a mixed-strategy actor-critic algorithm which converges to a probability distribution that assigns most weight to equilibrium states. \cite{mertikopoulos2019learning}  provided sufficient conditions under which no-regret learning converges to equilibrium. 

Continuous-action games with quadratic payoffs have been used extensively in the game theory literature \cite{ballester2006s,bramoulle2014strategic,galeotti2017targeting,acemoglu2015networks}.   \cite{leng2018learning} proposed algorithms to learn games with quadratic payoffs, in a simplified setting. However, the authors do not provide any theoretical guarantees. In this work, we focus on provable guarantees for a far more general class of games with quadratic payoffs in the high-dimensional regime. 

\paragraph{Our Contribution.}

We aim to propose a novel method to learn graphical games with quadratic payoffs, with the following provable guarantees in mind:
\begin{enumerate*}
	\item {\bf Correctness -} We want to develop a method which correctly recovers the set of Nash equilibria and the structure of the graphical games.
	\item {\bf Computational efficiency -} Our method must run fast enough to handle the high dimensional cases. Ideally, we want to have polynomial time complexity with respect to the number of players.
	\item {\bf Sample complexity -} We would like to use as few samples as possible for recovering the set of Nash equilibria. We want to achieve logarithmic sample complexity with respect to the number of players.
\end{enumerate*}

To this end, we propose a block-norm regularized method to learn graphical games with quadratic payoff functions. For $n$ players, at most $d$ in-neighbors per player, and $k$-dimensional action vectors, we show that $\mathcal{O}( k^5 d^3 \log (d n k) )$  \emph{samples} are sufficient to recover the complete characterization of the set of $\epsilon$-Nash equilibria. Under slightly more stringent conditions, we also recover the true structure of the game. Our method also runs in polynomial time complexity. 

Regarding the main challenges that we address, first, the set of Nash equilibria for continuous-action games is uncountable, while for discrete-action games is countable.  Our method provides the complete characterization of such uncountable sets. Our method is also oblivious to the exact process under which players converge to Nash equilibria. In fact, Nash equilibria can be "chosen" by nature in an arbitrary non-probabilistic fashion. We also do not assume any particular process that ``chooses'' Nash equilibria, such as, for instance, a stochastic process. Our method only needs access to some small number perturbed  equilibria.

\section{PRELIMINARIES}
\label{sec:notation and problem setting}

In this section, we introduce our notation and formally define the problem of learning graphical games with quadratic utility functions. Consider a directed graph $\graph(\nodes,\edges)$, where $\nodes$ and $\edges$ are set of vertices and edges respectively. We define $\nodes \triangleq \seq{n}$, where each vertex corresponds to one player. We denote the in-neighbors of a player $i$ by $\supp$, i.e., $\supp = \{ j \mid (j, i) \in \edges \}$ (i.e., the set of nodes that point to node $i$ in the graph). All the other players are denoted by $\supp^c$, i.e., $\supp^c = \seq{n} \backslash (\supp \cup i) $. Let $|\supp| \leq d$ and $|\supp^c| \leq n$. 

For each player $i \in \nodes$, there is a set of actions or \emph{pure-strategies}  $\actions_i$. That is, player $i$ can take action $x_i \in \actions_i$. Each action $x_i$ consists of making $k$ decisions on a limited budget $\budget \in \real$. We consider games with continuous actions. Mathematically, $x_i \in \real^k$ and $\| x_i \|_2 \leq \budget$. For each player $i$, there is also a local payoff function $u_i : \actions_i \times ( \bigtimes_{j \in \supp} \actions_j ) \to \real$ mapping the joint action of player $i$ and its in-neighbors $\supp$, to a real number. Later, we will define a particular kind of local payoff function which is of our interest. A joint action $\mb{x}^* \in \bigtimes_{i \in \nodes} \actions_i$ is a \emph{pure-strategy Nash equilibrium (PSNE)} of a graphical game iff, no player $i$ has any incentive to unilaterally deviate from the prescribed action $x_i^* \in \actions_i$, given the joint action of 
its in-neighbors $x_{\supp}^* \in \bigtimes_{j \in \supp} \actions_j$ in the equilibrium. We denote a game by $\game$, and the set of all PSNE and $\epsilon$-PSNE of $\game$, by  $\nash(\game)$ and $\nash_{\epsilon}(\game)$ respectively, for a constant $\epsilon > 0$. Formally,
\begin{align*}
	\nash(\game) &\triangleq \{ \mb{x}^* \in \bigtimes_{i \in \nodes} \actions_i \mid x_i^* \in \arg \max_{x_i \in \actions_i} u_i(x_i, x_{\supp}^*), \forall i \in \nodes  \} \\
	\nash_{\epsilon}(\game) &\triangleq \{ \mb{x}^* \in \bigtimes_{i \in \nodes} \actions_i \mid u_i(x_i^*, x_{\supp}^*) \geq -\epsilon + \max_{x_i \in \actions_i} u_i(x_i, x_{\supp}^*) , \forall i \in \nodes  \}
\end{align*}

\paragraph{Parametric Payoffs.} 
We are interested in solving a parametrized version of the problem. In that, given the weights $W_{ij}^* \in \real^{k\times k}, \forall i, j \in \nodes$, for each player $i$, we define the set of in-neighbors of player $i$ as $\supp = \{ j \mid W_{ij}^* \ne 0 \}$ and the payoff function 
\begin{align*} 
u_i(x_i, x_{\supp}) = - \| x_i - \sum_{j \in \supp} W_{ij}^* x_j \|_2
\end{align*}
Consider $\max_{x_i} u_i(x_i, x_{\supp}^*) = 0, \forall i \in \seq{n} $, then in a PSNE, each player $i$ matches their action $x_i$ to the weighted actions of their neighbors, i.e., $\sum_{j \in \supp} W_{ij}^* x_j^*$. Let $\epsilon > 0$ be a constant. The set of all $\epsilon$-PSNE of $\game$ is
\begin{align*} 
\nash_{\epsilon}(\game) =& \{ \mb{x}^* \in \bigtimes_{i=1}^n \actions_i \mid \| x_i^* - \sum_{j \in \supp} W_{ij}^* x_j^*  \|_2 \leq \epsilon, \forall i \in \nodes \}.
\end{align*} 

\paragraph{Sampling.} Given the above characterization, the set of $\epsilon$-PSNE is a convex polytope. We observe samples from the set of noisy PSNE which follow a local noise mechanism that adds noise independently per player. Observed joint actions 
\begin{align*}
\mb{x} = \mb{x}^* + \mb{e}
\end{align*}  
where $\mb{x}^*$ is a Nash equilibrium, that is $\mb{x}* \in \nash(\game)$ and $\mb{e}$ is independent zero mean sub-Gaussian noise with variance proxy $\sigma^2$. The class of sub-Gaussian variates includes for instance Gaussian variables, any bounded random variable (e.g. Bernoulli, multinomial, uniform), any random variable with strictly log-concave density, and any finite mixture of sub-Gaussian variables. 

\paragraph{Norms and Notations.}
For a matrix $\mb{A} \in \real^{p \times q}$ and two sets $S \subseteq \seq{p}$ and $T \subseteq \seq{q}$,  $\mb{A}_{S T}$ denotes $\mb{A}$ restricted to rows in $S$ and columns in $T$.  Similarly, $\mb{A}_{S.}$ and $\mb{A}_{.T}$ are row and column restricted matrices respectively. For a vector $\textbf{m} \in \real^q$, the $\ell_{\infty}$-norm is defined as $ \| \textbf{m} \|_{\infty} = \max_{i \in \seq{p}} | \textbf{m}_i | $. The Frobenius norm for a matrix $\mb{A} \in \real^{p \times q}$  is defined as 
\begin{align*}
 \| \mb{A} \|_F =  \sqrt{\sum_{i=1}^p \sum_{j=1}^q| \mb{A}_{ij} |^2}.
 \end{align*} 
 The $\ell_{\infty}$-operator norm for $\mb{A}$ is defined as 
\begin{align*} 
  \| \mb{A} \|_{\infty, \infty} = \max_{i \in \seq{p}}\sum_{j=1}^q | \mb{A}_{ij} | .
\end{align*} 
The spectral norm of $\mb{A}$ is defined as
\begin{align*}
 \| \mb{A} \|_{2,2} = \sup_{ \|\mb{x}\|_2 = 1} \|\mb{A} \mb{x}\|_2.
\end{align*} 
We also define a block matrix norm for row-partitioned block matrices. Let $\mb{A} \in \real^{\sum_{i=1}^k p_i \times q}, \forall i \in \seq{k}$ be a row-partitioned block matrix defined as follows: $ \mb{A} = \begin{bmatrix} \mb{A}_1 & \cdots & \mb{A}_k \end{bmatrix}^\T$ where each $\mb{A}_i \in \real^{p_i \times q}$.
Then 
\begin{align*} 
\| \mb{A} \|_{\BIF} &= \max_{i \in \seq{k}} \| \mb{A}_i \|_F\\
 \| \mb{A} \|_{\BIO} &= \max_{i \in \seq{k}} \sum_{l=1,m=1}^{l=p_i,m=q} | [\mb{A}_i]_{lm} |.
\end{align*}

\section{MAIN RESULT}
\label{sec:main result}

In this section, we describe our main theoretical results. But before we do that, we discuss some technical assumptions which are needed for our proofs.

\begin{assumption}[Budgeted actions]
	\label{assum:budgeted_actions}
	For all $x_i \in \actions_i, \| x_i \|_2 \leq \budget, \forall i \in \seq{n}$ for some $\budget > 0$.
\end{assumption}
\begin{assumption}[Maximum zero utility]
	\label{assum:max_zero_utility}
	At PSNE, $u_i(x_i^*, x_{\minus i}^*) = 0, \forall i \in \seq{n}$.
\end{assumption}
\begin{assumption}[Mutual Incoherence]
	\label{assum:mutual_incoherence}
	Consider $\bH = \frac{1}{T} \sum_{t=1}^T \big( {\mb{x}_{\minus i}^*}^t {{\mb{x}_{\minus i}^*}^t}^\T +  \sigma^2 \mb{I}\big)$ where $\mb{I}$ is the identity matrix, then $\res{\bH}{\supp^c\supp} \res{\bH}{\supp\supp}^{\minus 1} \leq 1 - \alpha $ for some $\alpha \in (0, 1]$.
\end{assumption}

Assumption \ref{assum:budgeted_actions} simply states that each player has a limited budget to allocate for its actions. For instance, consider simultaneous auctions in an online advertising, where a company chooses how to allocate its budget into several options. For a sufficiently large budget $\budget$, Assumption \ref{assum:max_zero_utility} is not difficult to fulfill for quadratic payoffs. We propose a mutual incoherence assumption (Assumption \ref{assum:mutual_incoherence}) for games. While mutual incoherence is new to graphical games, it has been a standard assumption in various estimation problems such as compressed sensing~\cite{wainwright2009sharp}, Markov random fields~\cite{ravikumar2010high}, non-parametric regression~\cite{ravikumar2007spam}, diffusion networks~\cite{daneshmand2014estimating}, among others.     

Now that all our assumptions are in place, we are ready to setup our estimation problem. Consider that we have access to $T$ perturbed equilibria, i.e., we have access to $x_i^t = {x_i^*}^t + e_i^t$ where superscript $t$ denotes the $t$-th sample and $e_i^t \in \real^k$ is a vector of zero-mean mutually independent sub-Gaussian noises with variance proxy $\sigma^2$. We estimate the parameters $W_{ij}$ for each $i, j \in \seq{n}$ by solving the following optimization problem:   
\begin{align}
\label{eq:opt_prob}
	\hat{W_{i\cdot}} =& \arg\min_{W_{i\cdot}} \frac{1}{T} \sum_{t=1}^{T} \| x_i^{t} - \sum_{\substack{j=1\\ j \ne i}}^n W_{ij} x_j^{t} \|_2^2 + \lambda \sum_{\substack{j=1\\ j \ne i}}^n \| W_{ij} \|_F 
\end{align}  
where $W_{i\cdot}$ denotes the collection of all $W_{ij}, \forall j \in \seq{n}, j \ne i$. Our next theorem states that the recovered $\hat{W}_{i.}$ completely characterizes the set of all $\epsilon$-Nash equilibria.

\begin{theorem}
	\label{thm:eps_nash_equilibria}
	Consider a continuous-action graphical game $\game$ such that Assumptions \ref{assum:budgeted_actions}, \ref{assum:max_zero_utility} and \ref{assum:mutual_incoherence} are satisfied for each player. Let $ \lambda > \max( 24\sqrt{2} \frac{1-\alpha}{\alpha} \sigma \budget W_{\max} \sqrt{\frac{k |\supp| \log( 2k^2 |\supp|)}{T}},  192 \frac{1-\alpha}{\alpha} \sigma^2  W_{\max} \\ \sqrt{\frac{k \log(k^2 |\supp|)}{T}}, 192 \frac{1-\alpha}{\alpha} \sigma^2 \sqrt{k} W_{\max}  \sqrt{\frac{W_{\max} |\supp| \log(|\supp|k)}{TW_{\min}}},  192 \frac{1-\alpha}{\alpha} k^{\frac{1}{4}} \sigma \sqrt{\frac{\log (2k^2|\supp|)}{T}},  \frac{24\sqrt{2}}{\alpha} k \sigma \budget W_{\max}  \\ \sqrt{\frac{|\supp^c \log(2k^2|\supp^c|)|}{T}},  \frac{192}{\alpha} \sigma^2 k W_{\max}  \sqrt{\frac{\log(k^2 |\supp^c|)}{T}}, \frac{192}{\alpha} \sigma^2 k W_{\max}  \sqrt{\frac{W_{\max} |\supp^c|^{\frac{1}{2}} \log(|\supp^c|k) }{T}},   \frac{24\sqrt{2}}{\alpha} k \sigma \budget \sqrt{\frac{\log (2k^2 |\supp^c|)}{T}} , \\ \frac{192}{\alpha} \sigma \sqrt{\frac{k \log(2k^2|\supp^c|)}{T}}, \frac{24 (1-\alpha) \sigma^2 \sqrt{k} \max_{ij} |W_{ij}| }{\alpha}, \frac{24 \sigma^2 k \max_{ij} |W_{ij}| }{\alpha} ) $, then the following claims hold.
	\begin{enumerate}
		\item We can recover $\nash_\epsilon(\game)$ by estimating $\mb{W}$ from the optimization problem \eqref{eq:opt_prob}.
		\item Furthermore, for each player $i \in \seq{n}$, if $\min_{j \in \supp} \|W_{ij}^*\|_{F} > 2 \delta(k, |\supp|, C_{\min}, \alpha, \lambda, \sigma, W_{\max})$, then we recover the exact structure of the graphical game $\game$. 
	\end{enumerate} 	
	where $C_{\min}$ is the minimum eigenvalue of $\res{\bH}{\supp\supp}$, $W_{\max} \triangleq \max_{i,j} |\mb{W}_{ij}|$, $\epsilon = |\supp| \delta(k, |\supp|, C_{\min}, \alpha, \\ \lambda, \sigma, W_{\max}) \budget$ and 
	
	\begin{align*}
	&\delta(k, |\supp|, C_{\min}, \alpha, \lambda, \sigma, W_{\max}) = k \sqrt{k|\supp|} \frac{2}{C_{\min}} 
	(\frac{\alpha \lambda}{24(1-\alpha)} + \sigma^2 (1 + \frac{\alpha \lambda}{24 (1-\alpha)\sigma^2 \sqrt{k} W_{\max}} +\\
	& \frac{\alpha \lambda}{24 (1 - \alpha) \sigma^2 \sqrt{k} \max_{ij} W_{\max}}) \sqrt{k} W_{\max}) + k \sqrt{k|\supp|}\frac{2}{C_{\min}} (\frac{\alpha \lambda}{24 (1-\alpha)} + \frac{\alpha \lambda}{24 (1-\alpha)})  +   \frac{\lambda}{2}  k \sqrt{k|\supp|}\frac{2}{C_{\min}} . 
	\end{align*}
\end{theorem}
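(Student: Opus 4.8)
The plan is to treat \eqref{eq:opt_prob} as a convex program and apply a primal--dual witness (PDW) argument, the block-norm analogue of the support-recovery technique used for the Lasso and group Lasso. Since the objective is separable across players, I fix a player $i$ and work with the block vector $W_{i\cdot}$. Writing the stationarity condition, the subdifferential of the penalty $\sum_{j\ne i}\|W_{ij}\|_F$ at a block $W_{ij}$ equals $W_{ij}/\|W_{ij}\|_F$ when $W_{ij}\ne 0$ and any matrix $Z_{ij}$ with $\|Z_{ij}\|_F\le 1$ otherwise. The KKT system therefore reads, in block form, $\bHhat_{\cdot\supp}(\hat W_{i\supp}-W^*_{i\supp}) + \bHhat_{\cdot\,\text{rest}}(\cdots) = (\text{score}) - \tfrac{\lambda}{2}\hat Z$, where $\bHhat=\frac1T\sum_t \mb{x}_{\minus i}^t(\mb{x}_{\minus i}^t)^\T$ is the empirical second-moment matrix and the score is the gradient of the loss evaluated at $\mb{W}^*$.

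I would construct the witness by forcing $\hat W_{ij}=0$ for $j\in\supp^c$, solving the restricted program on $\supp$, and then checking that the induced dual matrices satisfy $\|\hat Z_{ij}\|_F<1$ strictly for every $j\in\supp^c$. Establishing this strict dual feasibility is where Assumption \ref{assum:mutual_incoherence} enters: the incoherence bound $\|\res{\bH}{\supp^c\supp}\res{\bH}{\supp\supp}^{\minus 1}\|<1-\alpha$ controls the deterministic leakage of the support block onto the complement, leaving only an $O(1/\alpha)$-amplified noise term to dominate. A crucial simplification is that, because each sample is a noisy exact equilibrium, $x_i^{*t}=\sum_{j\in\supp}W^*_{ij}x_j^{*t}$, so the residual at the true parameter collapses to $e_i^t-\sum_{j\in\supp}W^*_{ij}e_j^t$, a zero-mean sub-Gaussian vector. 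The score is then $\frac1T\sum_t(e_i^t-\sum_j W^*_{ij}e_j^t)(x_j^t)^\T$, which I would bound in the $\|\cdot\|_{\BIF}$ norm using sub-Gaussian/sub-exponential concentration; the many arguments in the lower bound on $\lambda$ are precisely the high-probability bounds on the distinct score and covariance-deviation pieces (one family scaled by $(1-\alpha)/\alpha$ from the $\supp$ block, one scaled by $1/\alpha$ from the $\supp^c$ block), each chosen so that $\lambda$ dominates it and strict dual feasibility holds with probability tending to one.

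Given a valid witness, the restricted stationarity equation yields a deterministic bound $\|\hat W_{i\supp}-W^*_{i\supp}\|_{\BIF}\le\delta$, where $\delta$ is assembled from $C_{\min}^{\minus 1}$, the penalty $\lambda$, and the concentration terms; this is exactly the function $\delta(k,|\supp|,C_{\min},\alpha,\lambda,\sigma,W_{\max})$ stated in the theorem. For Claim~1 I would propagate this per-block error through the budget constraint: for any joint action with $\|x_j\|_2\le\budget$, $\|x_i-\sum_{j}\hat W_{ij}x_j\|_2 \le \|x_i-\sum_j W^*_{ij}x_j\|_2 + \sum_{j\in\supp}\|\hat W_{ij}-W^*_{ij}\|_F\,\budget \le |\supp|\delta\budget=\epsilon$, so the exact PSNE of the estimated game coincide with the $\epsilon$-PSNE polytope $\nash_{\epsilon}(\game)$ of the true game. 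For Claim~2, strict dual feasibility already rules out false edges ($\hat W_{ij}=0$ on $\supp^c$); the condition $\min_{j\in\supp}\|W^*_{ij}\|_F>2\delta$ then forces $\|\hat W_{ij}\|_F\ge\|W^*_{ij}\|_F-\delta>\delta>0$ on $\supp$, so no true edge is dropped and the support of $\game$ is recovered exactly.

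The main obstacle I anticipate is the probabilistic core: the entries of $\bHhat$ and of the score are products of sub-Gaussian variables, hence only sub-exponential, so the tail arguments require Bernstein-type bounds together with union bounds over the $O(k^2|\supp|)$ and $O(k^2|\supp^c|)$ blocks, all measured in the $\|\cdot\|_{\BIF}$ and $\|\cdot\|_{\BIO}$ norms rather than coordinate-wise. In particular, I must show that the incoherence and minimum-eigenvalue conditions, stated for the deterministic matrix $\bH$, transfer to the empirical $\bHhat$ with high probability, since it is $\bHhat$ (not $\bH$) that appears in the KKT conditions. Controlling this transfer, and tracking the $1/\alpha$ and $(1-\alpha)/\alpha$ amplifications through every term, is what produces both the $\mathcal{O}(k^5 d^3\log(dnk))$ sample complexity and the long list of requirements on $\lambda$.
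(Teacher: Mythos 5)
Your proposal follows essentially the same route as the paper's proof: a primal--dual witness construction with the Frobenius-norm subdifferential, strict dual feasibility on $\supp^c$ obtained from the mutual incoherence assumption transferred to the empirical matrix $\bHhat$, sub-exponential concentration with union bounds over the $O(k^2|\supp|)$ and $O(k^2|\supp^c|)$ blocks to justify the lower bound on $\lambda$, the $\|\cdot\|_{\BIF}$ error bound $\delta$ from the restricted stationarity equation, and the same budget-propagation argument for Claim~1 and beta-min argument for Claim~2. The approach and all key ingredients match the paper's argument.
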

\begin{proof}
	We will make use of the \emph{primal-dual witness} method to prove Theorem \ref{thm:eps_nash_equilibria}. By using the definition of Frobenious norm, optimization problem \eqref{eq:opt_prob} can be equivalently written as
\begin{align}
	\label{eq:form0}
	\begin{split}
		\hat{W_{i\cdot}} =& \arg\min_{W_{i.}} \frac{1}{T} \sum_{t=1}^{T} \| x_i^{t} - \sum_{j=1, j \ne i}^n W_{ij} x_j^{t} \|_2^2 + \lambda \sum_{j=1, j \ne i}^n  \sup_{\| Z_{ij} \|_F \leq 1}\inner{Z_{ij}}{W_{ij}}  \\
		\end{split}
\end{align}
Consider the term $\sup_{\| Z_{ij} \|_F \leq 1}\inner{Z_{ij}}{W_{ij}}$. We can assign specific values to $Z_{ij}$ to get the maximum possible value of $\inner{Z_{ij}}{W_{ij}}$. In particular, we can take if $ W_{ij} \ne \mb{0}$ then $Z_{ij} = \frac{W_{ij}}{\| W_{ij} \|_F}$, and if $W_{ij} = \mb{0}$ then $\|Z_{ij}\|_F \leq 1$.
Note that in the first case $\| Z_{ij} \|_F = 1$ and thus it gives the maximum value for $\inner{Z_{ij}}{W_{ij}}$ and no further improvement is possible. In the second case, since $W_{ij} = \mb{0}$, $Z_{ij}$ can take any value such that  $\|Z_{ij}\|_F \leq 1$ without affecting $\inner{Z_{ij}}{W_{ij}}$. We fix $Z_{ij}$ to one such value and rewrite equation \eqref{eq:form0} as
\begin{align}
\label{eq:form1}
\begin{split}
		\hat{W_{i\cdot}} &= \arg\min_{W_{i.}} \frac{1}{T} \sum_{t=1}^{T} \| x_i^{t} - \sum_{j=1, j \ne i}^n W_{ij} x_j^{t} \|_2^2 + \lambda \sum_{j=1, j \ne i}^n  \inner{Z_{ij}}{W_{ij}}  
	\end{split}
\end{align}
where the last equality comes by keeping in mind that $Z_{ij}$ are chosen as described above.  We can rewrite equation \eqref{eq:form1} as,
\begin{align*}
\begin{split}
	\hat{W_{i\cdot}} =& \arg\min_{W_{i\cdot}} \frac{1}{T} \sum_{t=1}^T \big( -2 \sum_{j=1, j \ne i}^n {x_i^t}^{\T} W_{ij} x_j^t + \sum_{\substack{j=1, j \ne i \\ k=1, k \ne i}}^n {x_j^t}^{\T}  W_{ij}^\T W_{ik} x_k^t \big) + \lambda \sum_{j=1, j\ne i}^n \inner{Z_{ij}}{W_{ij}}
\end{split}
\end{align*} 
Using the stationarity Karush-Kuhn-Tucker condition at the optimum, for each $W_{ij}$ we can  write,
\begin{align}
\begin{split}
	\frac{1}{T} \sum_{t=1}^T (-2 x_i^t {x_j^t}^\T + 2 \sum_{k=1, k \ne i}^n W_{ik} x_k^t x_j^t) + \lambda Z_{ij} = 0
\end{split}
\end{align}
Further note that $x_i^t = {x_i^*}^t + e_i^t$ , ${x_i^*}^t = \sum_{j=1, j\ne i}^n W_{ij}^* {x_j^*}^t$ and ${x_j^*}^t = x_j^t - e_j^t $ where $e_i^t, e_j^t \in \real^k$ are zero mean sub-Gaussian vectors with variance proxy $\sigma^2$. Therefore, $x_i^t$ can be written as a function of $W_{ij}^*, x_j^t, e_i^t$ and $e_j^t$.  Thus, by substituting $x_i^t$ and writing the system of equations in vector form, we get
\begin{align}
\label{eq:stationarity}
\begin{split}
	&\frac{1}{T} \sum_{t=1}^T \big( 2 \mb{x}_{\minus i}^t {\mb{x}_{\minus i}^t}^\T (\mb{W}_{i\cdot} - \mb{W}_{i\cdot}^*) - 2 \mb{x}_{\minus i}^t {\mb{e}_{\minus i}^t}^\T \mb{W}_{i.}^* - 2 \mb{x}_{\minus i}^t {\mb{e}_i^t}^\T \big) + \lambda \mb{Z}_{i\cdot} = \mb{0}
\end{split}
\end{align}
where
\begin{align}
	\mb{W}_{i\cdot}^* = \begin{bmatrix}
	{W_{i1}^*}^\T \\ \vdots \\ {W_{in}^*}^\T
	\end{bmatrix}, \mb{W}_{i\cdot} = \begin{bmatrix}
	W_{i1}^\T \\ \vdots \\ W_{in}^\T
	\end{bmatrix}, \mb{Z}_{i\cdot} = \begin{bmatrix}
	Z_{i1}^\T \\ \vdots \\ Z_{in}^\T
	\end{bmatrix}, \mb{x}_{\minus i}^t = \begin{bmatrix}
	x_1^t \\ \vdots \\ x_n^t
	\end{bmatrix}, \quad \mb{e}_{\minus i}^t = \begin{bmatrix}
	e_1^t \\ \vdots \\ e_n^t
	\end{bmatrix}    
\end{align}
with $\mb{W}_{i\cdot}^*, \mb{W}_{i\cdot}, \mb{Z}_{i\cdot}  \in \real^{(n-1)k \times k}$ and $\mb{x}_{\minus i}^t, \mb{e}_{\minus i}^t \in \real^{(n-1)k \times 1}$.
If we denote in-neighbors of $i$ by a set $\supp$ then ${W_{ij}^*}^\T = \mb{0}$ for all $j \notin \supp$. We assume that  ${W_{ij}}^\T = \mb{0}$ for all $j \notin \supp$. This choice will be justified later. Thus, the stationarity condition can be written as,
\begin{align}
\label{eq:stationarity1}
\begin{split}
\frac{1}{T} \sum_{t=1}^T \big( 2 \res{\mb{x}_{\minus i}^t {\mb{x}_{\minus i}^t}^\T}{\cdot \supp} \res{\mb{W}_{i\cdot} - \mb{W}_{i\cdot}^*}{\supp \cdot} - 2 \res{\mb{x}_{\minus i}^t {\mb{e}_{\minus i}^t}^\T}{\supp\cdot} \res{\mb{W}_{i.}^*}{\supp\cdot} 
- 2 \mb{x}_{\minus i}^t {\mb{e}_i^t}^\T \big) + \lambda \mb{Z}_{i\cdot} = \mb{0}
\end{split}
\end{align}
Equation \eqref{eq:stationarity1} can be decomposed in two separate equations. One for the players in $\supp$ and other for players not in $\supp$ which we denote by $\supp^c$.
\begin{align}
\label{eq:support}
\begin{split}
\frac{1}{T} \sum_{t=1}^T \big( 2 \res{\mb{x}_{\minus i}^t {\mb{x}_{\minus i}^t}^\T}{\supp \supp} \res{\mb{W}_{i\cdot} - \mb{W}_{i\cdot}^*}{\supp \cdot} - 2 \res{\mb{x}_{\minus i}^t {\mb{e}_{\minus i}^t}^\T}{\supp \supp} \res{\mb{W}_{i.}^*}{\supp\cdot} 
- 2 \res{\mb{x}_{\minus i}^t}{\supp\cdot} {\mb{e}_i^t}^\T \big) + \lambda \res{\mb{Z}_{i\cdot}}{\supp\cdot} = \mb{0}
\end{split}
\end{align}
and
\begin{align}
\label{eq:nonsupport}
\begin{split}
\frac{1}{T} \sum_{t=1}^T \big( 2 \res{\mb{x}_{\minus i}^t {\mb{x}_{\minus i}^t}^\T}{\supp^c \supp} \res{\mb{W}_{i\cdot} - \mb{W}_{i\cdot}^*}{\supp \cdot} - 2 \res{\mb{x}_{\minus i}^t {\mb{e}_{\minus i}^t}^\T}{\supp^c \supp} \res{\mb{W}_{i.}^*}{\supp\cdot} 
- 2 \res{\mb{x}_{\minus i}^t}{\supp^c\cdot} {\mb{e}_i^t}^\T \big) + \lambda \res{\mb{Z}_{i\cdot}}{\supp^c\cdot} = \mb{0}
\end{split}
\end{align}

Let $\hE{.}$ denote the empirical expectation. Then equation \eqref{eq:support} can be written as,
\begin{align}
\label{eq:support0}
\begin{split}
 2 \res{\hE{\mb{x}_{\minus i} {\mb{x}_{\minus i}}^\T}}{\supp \supp} \res{\mb{W}_{i\cdot} - \mb{W}_{i\cdot}^*}{\supp \cdot} - 2 \res{\hE{\mb{x}_{\minus i} {\mb{e}_{\minus i}}^\T}}{\supp \supp} \res{\mb{W}_{i.}^*}{\supp\cdot} - 2 \hE{\res{\mb{x}_{\minus i}}{\supp\cdot} {\mb{e}_i}^\T}  + \lambda \res{\mb{Z}_{i\cdot}}{\supp\cdot} = \mb{0} 
\end{split}
\end{align}
Before we move ahead, we provide some properties of the finite-sample regime which hold with high probability. The detailed proofs of these lemmas are available in Appendix \ref{appendix:proof of theorems and lemmas}. We define $\bHhat \triangleq \hE{\mb{x}_{\minus i} {\mb{x}_{\minus i}}^\T}$, then

\begin{lemma}[Positive minimum eigenvalue]
	\label{lem:min eigenvalue}
	$\Lambda_{\min}(\res{\bHhat}{\supp \supp}) > 0$ with probability at least $1 - \exp(-cT\sigma^4 + \mathcal{O}(k |\supp|)) - \exp(-\frac{\sigma^2 T}{128 \budget^2} + \mathcal{O}(k |\supp|)) $ for some constant $c > 0$ where $\Lambda_{\min}$ denotes the minimum eigenvalue.  
\end{lemma}

Next, we show that the mutual incoherence condition also holds in the finite-sample regime with high probability.
\begin{lemma}[Mutual incoherence in sample]
	\label{lemma:mutual_incoherence}
	If $\|\res{\bH}{\supp^c\supp} \res{\bH}{\supp \supp}^{-1}\|_{\BIO} \leq 1 - \alpha$ for $\alpha \in (0, 1]$ then $\|\res{\bHhat}{\supp^c \supp}\res{\bHhat}{\supp \supp}^{-1}\|_{\BIO} \leq 1 - \frac{\alpha}{2}$ with probability at least $ 1 - \mathcal{O}( \exp(\frac{-K T}{ k^5  |\supp|^3} + \log k |\supp^c| + \log k |\supp| )) $ for some $K > 0$. 
\end{lemma}

Now we can use Lemma \ref{lem:min eigenvalue} and \ref{lemma:mutual_incoherence} to prove our main result. We can rewrite equation \eqref{eq:support0} as, 
\begin{align}
\label{eq:support1}
\begin{split}
\res{\mb{W}_{i\cdot} - \mb{W}_{i\cdot}^*}{\supp \cdot}  =& {\res{\hE{\mb{x}_{\minus i} {\mb{x}_{\minus i}}^\T}}{\supp \supp}}^{\minus 1}  \res{\hE{\mb{x}_{\minus i} {\mb{e}_{\minus i}}^\T}}{\supp \supp} \res{\mb{W}_{i.}^*}{\supp\cdot} +  {\res{\hE{\mb{x}_{\minus i} {\mb{x}_{\minus i}}^\T}}{\supp \supp}}^{\minus 1} \hE{\res{\mb{x}_{\minus i}}{\supp\cdot} {\mb{e}_i}^\T} \\
 &- \frac{\lambda}{2}  {\res{\hE{\mb{x}_{\minus i} {\mb{x}_{\minus i}}^\T}}{\supp \supp}}^{\minus 1} \res{\mb{Z}_{i\cdot}}{\supp\cdot}
\end{split}
\end{align}
This is possible because $\lambda_{\min}(\res{\hE{\mb{x}_{\minus i} {\mb{x}_{\minus i}}^\T}}{\supp \supp}) > 0$ from Lemma \ref{lem:min eigenvalue}. Using equation \eqref{eq:support1}, we can write equation \eqref{eq:nonsupport} as,
\begin{align}
\label{eq:nonsupport1}
\begin{split}
 & \res{\hE{\mb{x}_{\minus i} {\mb{x}_{\minus i}}^\T}}{\supp^c \supp} \big(  {\res{\hE{\mb{x}_{\minus i} {\mb{x}_{\minus i}}^\T}}{\supp \supp}}^{\minus 1}  \res{\hE{\mb{x}_{\minus i} {\mb{e}_{\minus i}}^\T}}{\supp \supp} \res{\mb{W}_{i.}^*}{\supp\cdot}  
 +   {\res{\hE{\mb{x}_{\minus i} {\mb{x}_{\minus i}}^\T}}{\supp \supp}}^{\minus 1} \hE{\res{\mb{x}_{\minus i}}{\supp\cdot} {\mb{e}_i}^\T} \\
& - \frac{\lambda}{2}  {\res{\hE{\mb{x}_{\minus i} {\mb{x}_{\minus i}}^\T}}{\supp \supp}}^{\minus 1} \res{\mb{Z}_{i\cdot}}{\supp\cdot} \big) - \hE{ \res{\mb{x}_{\minus i} {\mb{e}_{\minus i}}^\T}{\supp^c \supp} \res{\mb{W}_{i.}^*}{\supp\cdot}} 
-  \hE{\res{\mb{x}_{\minus i}}{\supp^c\cdot} {\mb{e}_i}^\T} + \frac{\lambda}{2} \res{\mb{Z}_{i\cdot}}{\supp^c\cdot} = \mb{0}
\end{split}
\end{align}

Let $\mb{M} = \res{\hE{\mb{x}_{\minus i} {\mb{x}_{\minus i}}^\T}}{\supp^c \supp} \big(  {\res{\hE{\mb{x}_{\minus i} {\mb{x}_{\minus i}}^\T}}{\supp \supp}}^{\minus 1}$, then

\begin{align*}
	\frac{\lambda}{2} \res{\mb{Z}_{i\cdot}}{\supp^c\cdot} =& - \mb{M} \res{\hE{\mb{x}_{\minus i} {\mb{e}_{\minus i}}^\T}}{\supp \supp} \res{\mb{W}_{i.}^*}{\supp\cdot} -  \mb{M}  \hE{\res{\mb{x}_{\minus i}}{\supp\cdot} {\mb{e}_i}^\T}  
	+  \frac{\lambda}{2} \mb{M} \res{\mb{Z}_{i\cdot}}{\supp\cdot}  + \hE{\res{\mb{x}_{\minus i} {\mb{e}_{\minus i}}^\T}{\supp^c \supp}\\ &\res{\mb{W}_{i.}^*}{\supp\cdot} }+ \hE{\res{\mb{x}_{\minus i}}{\supp^c\cdot} {\mb{e}_i}^\T} 
\end{align*}
By taking the $\BIF$-norm on both sides and using the norm triangle inequality,
\begin{align*}
	\frac{\lambda}{2} \| \res{\mb{Z}_{i\cdot}}{\supp^c\cdot} \|_{\BIF} \leq& \| \mb{M} \res{\hE{\mb{x}_{\minus i} {\mb{e}_{\minus i}}^\T}}{\supp \supp} \res{\mb{W}_{i.}^*}{\supp\cdot}  - \mb{M}  \hE{\res{\mb{x}_{\minus i}}{\supp\cdot} {\mb{e}_i}^\T} 
	+  \frac{\lambda}{2} \mb{M} \res{\mb{Z}_{i\cdot}}{\supp\cdot} \|_{\BIF}  \\
	&+ \| \hE{\res{\mb{x}_{\minus i} {\mb{e}_{\minus i}}^\T}{\supp^c \supp}} \res{\mb{W}_{i.}^*}{\supp\cdot} \|_{\BIF} + \| \hE{\res{\mb{x}_{\minus i}}{\supp^c\cdot} {\mb{e}_i}^\T} \|_{\BIF} 
\end{align*}
Using the inequality $\| A B \|_{\BIF} \leq \| A \|_{\BIO} \|B\|_{\infty, 2}$ form Lemma \ref{lem:norminequalities}, we get
\begin{align*}
	\frac{\lambda}{2} \| \res{\mb{Z}_{i\cdot}}{\supp^c\cdot} \|_{\BIF} \leq& \| \mb{M} \|_{\BIO} \big( \| \res{\hE{\mb{x}_{\minus i} {\mb{e}_{\minus i}}^\T}}{\supp \supp} \res{\mb{W}_{i.}^*}{\supp\cdot}  -   \hE{\res{\mb{x}_{\minus i}}{\supp\cdot} {\mb{e}_i}^\T} 
	+  \frac{\lambda}{2} \res{\mb{Z}_{i\cdot}}{\supp\cdot} \|_{\infty,2} \big)  \\
	&+ \| \hE{\res{\mb{x}_{\minus i} {\mb{e}_{\minus i}}^\T}{\supp^c \supp} }\res{\mb{W}_{i.}^*}{\supp\cdot} \|_{\BIF}  + \| \hE{\res{\mb{x}_{\minus i}}{\supp^c\cdot} {\mb{e}_i}^\T }\|_{\BIF} 
\end{align*}
Again using the norm triangle inequality,
\begin{align}
	\label{eq:mainequations}
	\begin{split}
	\frac{\lambda}{2} \| \res{\mb{Z}_{i\cdot}}{\supp^c\cdot} \|_{\BIF} \leq& \| \mb{M} \|_{\BIO} \big( \| \res{\hE{\mb{x}_{\minus i} {\mb{e}_{\minus i}}^\T}}{\supp \supp} \res{\mb{W}_{i.}^*}{\supp\cdot} \|_{\infty, 2} 
	 + \|   \hE{\res{\mb{x}_{\minus i}}{\supp\cdot} {\mb{e}_i}^\T} \|_{\infty, 2} 
	+  \frac{\lambda}{2} \| \res{\mb{Z}_{i\cdot}}{\supp\cdot} \|_{\infty,2} \big)  \\
	&+ \| \hE{\res{\mb{x}_{\minus i} {\mb{e}_{\minus i}}^\T}{\supp^c \supp}} \res{\mb{W}_{i.}^*}{\supp\cdot} \|_{\BIF}  + \| \hE{\res{\mb{x}_{\minus i}}{\supp^c\cdot} {\mb{e}_i}^\T} \|_{\BIF} 
\end{split}
\end{align}
Next, we provide some technical lemmas (detailed proofs in Appendix \ref{appendix:proof of theorems and lemmas}) to bound all the terms in right hand side of equation \eqref{eq:mainequations}.
\begin{lemma}[Bound on \emph{$\| \res{\hE{\mb{x}_{\minus i} {\mb{e}_{\minus i}}^\T}}{\supp \supp} \res{\mb{W}_{i.}^*}{\supp\cdot} \|_{\infty, 2}$}]
	\label{lem: xew support}
	For some $\epsilon_1 > 0, 0 < \epsilon_2 < 8$ and $\epsilon_3 < 8 \frac{\sqrt{|\supp|} \max_{ij} |\mb{W}_{ij}^*|}{\min_{ij}|\mb{W}_{ij}^*|}$,
	\begin{align}
	\begin{split}
	&\prob\big(\| \res{\hE{\mb{x}_{\minus i} {\mb{e}_{\minus i}}^\T   }}{\supp \supp} \res{\mb{W}_{i.}^*}{\supp\cdot} \|_{\infty, 2} >  \epsilon_1 + \sigma^2 (1 + \epsilon_2 +\epsilon_3 ) \max_{i \in \ind(l), l \in \supp} \sqrt{\sum_{j=1}^{k} |\mb{W}_{ij}^*|^2} \big) \leq \\
	&\exp(-\frac{\epsilon_1^2 T}{2k \sigma^2 \budget^2 \max_{j}\sum_{k=1}^{|\supp|} {\mb{W}_{kj}^*}^2 } + \log(2 k^2 |\supp|) ) + k^2 |\supp| \exp(-\frac{T\epsilon_2^2}{64}) + \sum_{\substack{i \in \ind(l)\\ l \in \supp}} \sum_{j=1}^{k} \exp(-\frac{T\epsilon_3^2 |\mb{W}_{ij}^*|}{64 |\sqrt{\sum_{\substack{k=1 \\ k \ne i}}^{|\supp|}  {\mb{W}_{kj}^*}^2}|})
	\end{split}
	\end{align}
\end{lemma}

\begin{lemma}[Bound on \emph{$\|   \hE{\res{\mb{x}_{\minus i}}{\supp\cdot} {\mb{e}_i}^\T} \|_{\infty, 2} $}]
	\label{lem:bound sup xe}
	For some $\epsilon_4 > 0$ and $\epsilon_5 < 8 \sqrt{k} \sigma^2$,
	\begin{align}
	\begin{split}
	\prob(\| \hE{\res{\mb{x}_{\minus i}}{\supp\cdot} {\mb{e}_i}^\T}\|_{\infty, 2} \geq \epsilon_4 + \epsilon_5) \leq 
	\exp(- \frac{\epsilon_4^2 T}{2 k \sigma^2 \budget^2 } + \log(2k^2|\supp|)) + \exp(- \frac{\epsilon_5^2 T}{64 \sqrt{k} \sigma^2 } + \log(2k^2|\supp|))
	\end{split}
	\end{align}
\end{lemma}

\begin{lemma}[Bound on \emph{$\| \hE{\res{\mb{x}_{\minus i} {\mb{e}_{\minus i}}^\T}{\supp^c \supp}} \res{\mb{W}_{i.}^*}{\supp\cdot} \|_{\BIF}$}]
	\label{lem:bound x-ie-i}
	For some $\epsilon_6 > 0, 0 <\epsilon_7 < 8$ and $ \epsilon_8 <  8 \frac{\sqrt{|\supp^c|} \max_{ij} |\mb{W}_{ij}^*|}{\min_{ij}|\mb{W}_{ij}^*|}$,
	\begin{align}
	\begin{split}
	&\prob\big(\| \res{\hE{\mb{x}_{\minus i} {\mb{e}_{\minus i}}^\T   }}{\supp^c \supp} \res{\mb{W}_{i.}^*}{\supp\cdot} \|_{\BIF} > \epsilon_6 + \sigma^2 (1 + \epsilon_7 + \epsilon_8 ) \max_{l \in \supp^c} \sqrt{\sum_{i \in \ind(l)}\sum_{j=1}^{k} |\mb{W}_{ij}^*|^2} \big) \\
	&\leq\exp(-\frac{\epsilon_6^2 T}{2k^2 \sigma^2 \budget^2 \max_{j}\sum_{k=1}^{|\supp^c|} {\mb{W}_{kj}^*}^2 } + \log(2 k^2 |\supp^c|) ) +
	 k^2 |\supp^c| \exp(-\frac{T\epsilon_7^2}{64}) +\\
	& \sum_{i \in \ind(l), l \in \supp^c} \sum_{j=1}^{k} \exp(-\frac{T\epsilon_8^2 |\mb{W}_{ij}^*|}{64 |\sqrt{\sum_{\substack{k=1 \\ k \ne i}}^{|\supp^c|}  {\mb{W}_{kj}^*}^2}|})
	\end{split}
	\end{align}
\end{lemma}

\begin{lemma}[Bound on \emph{$\| \hE{\res{\mb{x}_{\minus i}}{\supp^c\cdot} {\mb{e}_i}^\T} \|_{\BIF} $}]
	\label{lem:bound x-ie_i}
	For some $\epsilon_9 > 0$ and $\epsilon_{10} < 8 k \sigma^2$,
	\begin{align}
	\begin{split}
	&\prob(\| \hE{\res{\mb{x}_{\minus i}}{\supp^c\cdot} {\mb{e}_i}^\T}\|_{\BIF} \geq \epsilon_9 + \epsilon_{10}) \leq \exp(- \frac{\epsilon_9^2 T}{2 k^2 \sigma^2 \budget^2 } + \log(2k^2|\supp^c|)) + \exp(- \frac{\epsilon_{10}^2 T}{64 k \sigma^2 } + \log(2k^2|\supp^c|))
	\end{split}
	\end{align}
\end{lemma}
%
%
%
Recall that,
\begin{align}
\begin{split}
	\frac{\lambda}{2} \| \res{\mb{Z}_{i\cdot}}{\supp^c\cdot} \|_{\BIF} \leq& \| \mb{M} \|_{\BIO} \big( \| \res{\hE{\mb{x}_{\minus i} {\mb{e}_{\minus i}}^\T}}{\supp \supp} \res{\mb{W}_{i.}^*}{\supp\cdot} \|_{\infty, 2} 
	+ \|   \hE{\res{\mb{x}_{\minus i}}{\supp\cdot} {\mb{e}_i}^\T} \|_{\infty, 2} 
	+  \frac{\lambda}{2} \| \res{\mb{Z}_{i\cdot}}{\supp\cdot} \|_{\infty,2} \big)  \\
	&+ \| \hE{\res{\mb{x}_{\minus i}^t {\mb{e}_{\minus i}^t}^\T}{\supp^c \supp}} \res{\mb{W}_{i.}^*}{\supp\cdot} \|_{\BIF}  	+ \| \hE{\res{\mb{x}_{\minus i}^t}{\supp^c\cdot} {\mb{e}_i^t}^\T} \|_{\BIF} 
\end{split}
\end{align}

We already showed that mutual incoherence holds in the finite-sample regime, i.e., $\| \mb{M} \|_{\BIO} \leq 1 - \alpha$ for some $0 <\alpha < 1$. Also note that $\| \res{\mb{Z}_{i\cdot}}{\supp\cdot} \|_{\infty,2} \leq 1$. It follows that,
\begin{align*}
& \frac{\lambda}{2} \| \res{\mb{Z}_{i\cdot}}{\supp^c\cdot} \|_{\BIF} \leq (1 - \alpha) \big( \| \res{\hE{\mb{x}_{\minus i} {\mb{e}_{\minus i}}^\T}}{\supp \supp} \res{\mb{W}_{i.}^*}{\supp\cdot} \|_{\infty, 2}  + \|   \hE{\res{\mb{x}_{\minus i}}{\supp\cdot} {\mb{e}_i}^\T} \|_{\infty, 2} 
+  \frac{\lambda}{2}  \big)  \\
&+ \| \hE{\res{\mb{x}_{\minus i} {\mb{e}_{\minus i}}^\T}{\supp^c \supp}} \res{\mb{W}_{i.}^*}{\supp\cdot} \|_{\BIF}  	+\| \hE{\res{\mb{x}_{\minus i}}{\supp^c\cdot} {\mb{e}_i}^\T} \|_{\BIF} 
\end{align*}
Using bounds from Lemmas \ref{lem: xew support}, \ref{lem:bound sup xe}, \ref{lem:bound x-ie-i},  and \ref{lem:bound x-ie_i}, we get
\begin{align*}
\frac{\lambda}{2} \| \res{\mb{Z}_{i\cdot}}{\supp^c\cdot} \|_{\BIF} \leq& (1 - \alpha) \big( \epsilon_1 +  \sigma^2(1 + \epsilon_2 + \epsilon_3) \max_{i \in \ind(l), l \in \supp} \sqrt{\sum_{j=1}^k |\mb{W}_{ij}^2|}  + \epsilon_4 + \epsilon_5 +  \frac{\lambda}{2}  \big)  + \epsilon_6 + \sigma^2\\
& (1 + \epsilon_7 + \epsilon_8 ) \max_{l \in \supp^c} \sqrt{\sum_{i \in \ind(l)}\sum_{j=1}^{k} |\mb{W}_{ij}^*|^2}   + \epsilon_9 + \epsilon_{10}
\end{align*}
where,
\begin{align*}
&\epsilon_1 > 0, \, 0 < \epsilon_2 < 8, \, \epsilon_3 < 8 \frac{\sqrt{|\supp|} \max_{ij} |\mb{W}_{ij}^*|}{\min_{ij}|\mb{W}_{ij}^*|} \epsilon_4 > 0, \, \epsilon_5 < 8 \sqrt{k} \sigma^2, \, \epsilon_6 > 0, \, 0 < \epsilon_7 < 8, \\
&\epsilon_8 <  8 \frac{\sqrt{|\supp^c|} \max_{ij} |\mb{W}_{ij}^*|}{\min_{ij}|\mb{W}_{ij}^*|}, \, \epsilon_9 > 0, \, \epsilon_{10} < 2k\sigma^2
\end{align*}
If these conditions hold, then
\begin{align*}
	\frac{\lambda}{2} \| \res{\mb{Z}_{i\cdot}}{\supp^c\cdot} \|_{\BIF} \leq& (1 - \alpha) \big( \epsilon_1 + \sigma^2(1 + \epsilon_2 + \epsilon_3) \sqrt{k} \max_{ij}  |\mb{W}_{ij}|  + \epsilon_4 + \epsilon_5 +  \frac{\lambda}{2}  \big)  + \epsilon_6 + \sigma^2\\
	& (1 + \epsilon_7 + \epsilon_8 ) k \max_{ij} |\mb{W}_{ij}^*|   	+ \epsilon_9 + \epsilon_{10} 
\end{align*}
After rearranging the terms, we get,
\begin{align}
	\label{eq:zsc final} 
	\begin{split}
	&\frac{\lambda}{2} \| \res{\mb{Z}_{i\cdot}}{\supp^c\cdot} \|_{\BIF} \leq (1 - \alpha) \epsilon_1 + 
	(1 - \alpha)  \sigma^2 \sqrt{k} \max_{ij}  |\mb{W}_{ij}| + (1 - \alpha) \sigma^2  \epsilon_2 \sqrt{k} \max_{ij}  |\mb{W}_{ij}| + (1 - \alpha) \\ 
	&\sigma^2 \epsilon_3 \sqrt{k} \max_{ij}  |\mb{W}_{ij}|  + 
	(1-\alpha)\epsilon_4 + (1 - \alpha)\epsilon_5 + (1-\alpha)\frac{\lambda}{2}  +  \epsilon_6 + \sigma^2 k \max_{ij} |\mb{W}_{ij}^*| + \epsilon_7 \sigma^2 k \\ 
	&\max_{ij} |\mb{W}_{ij}^*| + \sigma^2 \epsilon_8  k \max_{ij} |\mb{W}_{ij}^*|   	+ \epsilon_9 + \epsilon_{10} 
	 \end{split}
\end{align}
\paragraph{Choice of $\lambda$.}

To keep the RHS of equation \eqref{eq:zsc final} less than $\frac{\lambda}{2}$, we need to set $\epsilon_1 < \frac{\alpha \lambda}{24(1-\alpha)}, \epsilon_2 < \frac{\alpha \lambda}{24 (1-\alpha)\sigma^2 \sqrt{k} \max_{ij} |W_{ij}|}, \epsilon_3 < \frac{\alpha \lambda}{24 (1 - \alpha) \sigma^2 \sqrt{k} \max_{ij} |W_{ij}|}, \epsilon_4 < \frac{\alpha \lambda}{24 (1-\alpha)}, \epsilon_5 < \frac{\alpha \lambda}{24 (1-\alpha)}, \epsilon_6 < \frac{\alpha \lambda}{24}, \epsilon_7 < \frac{\alpha \lambda}{24 \sigma^2 k \max_{ij} |W_{ij}|}, \epsilon_8 <  \frac{\alpha \lambda}{24 \sigma^2 k \max_{ij} |W_{ij}|}, \epsilon_9 < \frac{\alpha \lambda}{24}$ and  $\epsilon_{10} < \frac{\alpha \lambda}{24}$. We also want to make sure that claim in Lemma \ref{lem: xew support}, \ref{lem:bound sup xe}, \ref{lem:bound x-ie-i} and \ref{lem:bound x-ie_i} hold with high probability. This can be achieved by keeping a $\lambda$ such that 
$ \lambda > \max( 24\sqrt{2} \frac{1-\alpha}{\alpha} \sigma \budget W_{\max} \sqrt{\frac{k |\supp| \log( 2k^2 |\supp|)}{T}},  192 \frac{1-\alpha}{\alpha} \sigma^2  W_{\max} \sqrt{\frac{k \log(k^2 |\supp|)}{T}}, 192 \frac{1-\alpha}{\alpha} \sigma^2 \sqrt{k} W_{\max} \\ \sqrt{\frac{W_{\max} |\supp| \log(|\supp|k)}{TW_{\min}}},  192 \frac{1-\alpha}{\alpha} k^{\frac{1}{4}} \sigma \sqrt{\frac{\log (2k^2|\supp|)}{T}},  \frac{24\sqrt{2}}{\alpha} k \sigma \budget W_{\max} \sqrt{\frac{|\supp^c \log(2k^2|\supp^c|)|}{T}},  \frac{192}{\alpha} \sigma^2 k W_{\max} \\ \sqrt{\frac{\log(k^2 |\supp^c|)}{T}}, \frac{192}{\alpha} \sigma^2 k W_{\max}  \sqrt{\frac{W_{\max} |\supp^c|^{\frac{1}{2}} \log(|\supp^c|k) }{T}},   \frac{24\sqrt{2}}{\alpha} k \sigma \budget \sqrt{\frac{\log (2k^2 |\supp^c|)}{T}} , \frac{192}{\alpha} \sigma \sqrt{\frac{k \log(2k^2|\supp^c|)}{T}},\\ \frac{24 (1-\alpha) \sigma^2 \sqrt{k} \max_{ij} |W_{ij}| }{\alpha}, \frac{24 \sigma^2 k \max_{ij} |W_{ij}| }{\alpha} ) $.

This particular choice of $\lambda$ implies that $\frac{\lambda}{2} \| \res{\mb{Z}_{i\cdot}}{\supp^c\cdot} \|_{\BIF}  < \frac{\lambda}{2}$
with high probability which in turn ensures that $W_{ij}$ are zero for all $j \in \supp^c$ with high probability. Now,
\begin{align}
\begin{split}
	\res{\mb{W}_{i\cdot} - \mb{W}_{i\cdot}^*}{\supp \cdot}  =& {\res{\hE{\mb{x}_{\minus i} {\mb{x}_{\minus i}}^\T}}{\supp \supp}}^{\minus 1}  \res{\hE{\mb{x}_{\minus i} {\mb{e}_{\minus i}}^\T}}{\supp \supp} \res{\mb{W}_{i.}^*}{\supp\cdot} +  {\res{\hE{\mb{x}_{\minus i} {\mb{x}_{\minus i}}^\T}}{\supp \supp}}^{\minus 1} \hE{\res{\mb{x}_{\minus i}}{\supp\cdot} {\mb{e}_i}^\T} \\
	&- \frac{\lambda}{2}  {\res{\hE{\mb{x}_{\minus i} {\mb{x}_{\minus i}}^\T}}{\supp \supp}}^{\minus 1} \res{\mb{Z}_{i\cdot}}{\supp\cdot} 
\end{split}
\end{align}
By taking the $\BIF$-norm on both sides,
\begin{align*}
	\| \res{\mb{W}_{i\cdot} - \mb{W}_{i\cdot}^*}{\supp \cdot} \|_{\BIF}  =& \| {\res{\hE{\mb{x}_{\minus i} {\mb{x}_{\minus i}}^\T}}{\supp \supp}}^{\minus 1}  \res{\hE{\mb{x}_{\minus i} {\mb{e}_{\minus i}}^\T}}{\supp \supp} \res{\mb{W}_{i.}^*}{\supp\cdot} +  {\res{\hE{\mb{x}_{\minus i} {\mb{x}_{\minus i}}^\T}}{\supp \supp}}^{\minus 1} \\
	& \hE{\res{\mb{x}_{\minus i}}{\supp\cdot} {\mb{e}_i}^\T}  \frac{\lambda}{2}  {\res{\hE{\mb{x}_{\minus i} {\mb{x}_{\minus i}}^\T}}{\supp \supp}}^{\minus 1} \res{\mb{Z}_{i\cdot}}{\supp\cdot} \|_{\BIF} 
\end{align*}
Using the norm triangle inequality,
\begin{align*}
	\| \res{\mb{W}_{i\cdot} - \mb{W}_{i\cdot}^*}{\supp \cdot} \|_{\BIF} &\leq  \| {\res{\hE{\mb{x}_{\minus i} {\mb{x}_{\minus i}}^\T}}{\supp \supp}}^{\minus 1}  \res{\hE{\mb{x}_{\minus i} {\mb{e}_{\minus i}}^\T}}{\supp \supp} \res{\mb{W}_{i.}^*}{\supp\cdot} \|_{\BIF} + \\
	& \| {\res{\hE{\mb{x}_{\minus i} {\mb{x}_{\minus i}}^\T}}{\supp \supp}}^{\minus 1} \hE{\res{\mb{x}_{\minus i}}{\supp\cdot} {\mb{e}_i}^\T} \|_{\BIF}  + \| \frac{\lambda}{2}  {\res{\hE{\mb{x}_{\minus i} {\mb{x}_{\minus i}}^\T}}{\supp \supp}}^{\minus 1} \res{\mb{Z}_{i\cdot}}{\supp\cdot} \|_{\BIF}
\end{align*}
Using the inequality $\| A B \|_{\BIF} \leq \| A \|_{\BIO} \| B \|_{\infty, 2} $ from Lemma \ref{lem:norminequalities}, we get
\begin{align*}
	&\| \res{\mb{W}_{i\cdot} - \mb{W}_{i\cdot}^*}{\supp \cdot} \|_{\BIF} \leq  \| {\res{\hE{\mb{x}_{\minus i} {\mb{x}_{\minus i}}^\T}}{\supp \supp}}^{\minus 1} \|_{\BIO} \| \res{\hE{\mb{x}_{\minus i} {\mb{e}_{\minus i}}^\T}}{\supp \supp} \res{\mb{W}_{i.}^*}{\supp\cdot} \|_{\infty, 2} + \\
	& \| {\res{\hE{\mb{x}_{\minus i} {\mb{x}_{\minus i}}^\T}}{\supp \supp}}^{\minus 1} \|_{\BIO} \| \hE{\res{\mb{x}_{\minus i}}{\supp\cdot} {\mb{e}_i}^\T} \|_{\infty, 2}  + 
	 \| \frac{\lambda}{2}  {\res{\hE{\mb{x}_{\minus i} {\mb{x}_{\minus i}}^\T}}{\supp \supp}}^{\minus 1} \|_{\BIF} \| \res{\mb{Z}_{i\cdot}}{\supp\cdot} \|_{\infty, 2}
\end{align*}
Using the inequality $\| A \|_{\BIO} \leq k \| A \|_{\infty, \infty}$, where $k$ is the maximum number of rows in a block of $A$, we obtain
\begin{align*}
	&\| \res{\mb{W}_{i\cdot} - \mb{W}_{i\cdot}^*}{\supp \cdot} \|_{\BIF} \leq  k \| {\res{\hE{\mb{x}_{\minus i} {\mb{x}_{\minus i}}^\T}}{\supp \supp}}^{\minus 1} \|_{\infty, \infty} \| \res{\hE{\mb{x}_{\minus i} {\mb{e}_{\minus i}}^\T}}{\supp \supp} \res{\mb{W}_{i.}^*}{\supp\cdot} \|_{\infty, 2} + \\  
	& k \| {\res{\hE{\mb{x}_{\minus i} {\mb{x}_{\minus i}}^\T}}{\supp \supp}}^{\minus 1} \|_{\infty, \infty} \| \hE{\res{\mb{x}_{\minus i}}{\supp\cdot} {\mb{e}_i}^\T} \|_{\infty, 2}  +  \frac{\lambda}{2}  k \| {\res{\hE{\mb{x}_{\minus i} {\mb{x}_{\minus i}}^\T}}{\supp \supp}}^{\minus 1} \|_{\infty, \infty} 
\end{align*}
Since $\| A \|_{\infty, \infty} \leq \sqrt{p} \| A \|_{2, 2} $ for $A \in \real^{p \times p}$,
\begin{align*}
	& \| \res{\mb{W}_{i\cdot} - \mb{W}_{i\cdot}^*}{\supp \cdot} \|_{\BIF} \leq  k \sqrt{k|\supp|} \| {\res{\hE{\mb{x}_{\minus i} {\mb{x}_{\minus i}}^\T}}{\supp \supp}}^{\minus 1} \|_{2, 2} \| \res{\hE{\mb{x}_{\minus i} {\mb{e}_{\minus i}}^\T}}{\supp \supp} \res{\mb{W}_{i.}^*}{\supp\cdot} \|_{\infty, 2} \\
	&+ k \sqrt{k|\supp|} \| {\res{\hE{\mb{x}_{\minus i} {\mb{x}_{\minus i}}^\T}}{\supp \supp}}^{\minus 1} \|_{2, 2} \| \hE{\res{\mb{x}_{\minus i}}{\supp\cdot} {\mb{e}_i}^\T} \|_{\infty, 2}  + \frac{\lambda}{2}  k \sqrt{k|\supp|} \| {\res{\hE{\mb{x}_{\minus i} {\mb{x}_{\minus i}}^\T}}{\supp \supp}}^{\minus 1} \|_{2, 2} 
\end{align*}
Substituting for $\|{\res{\hE{\mb{x}_{\minus i} {\mb{x}_{\minus i}}^\T}}{\supp \supp}}^{\minus 1}  \|_{2,2}$,
\begin{align*} 
	\| \res{\mb{W}_{i\cdot} - \mb{W}_{i\cdot}^*}{\supp \cdot} \|_{\BIF} \leq&  k \sqrt{k|\supp|} \frac{2}{C_{\min}} \| \res{\hE{\mb{x}_{\minus i} {\mb{e}_{\minus i}}^\T}}{\supp \supp} \res{\mb{W}_{i.}^*}{\supp\cdot} \|_{\infty, 2} + k \\
	& \sqrt{k|\supp|} \frac{2}{C_{\min}} \| \hE{\res{\mb{x}_{\minus i}}{\supp\cdot} {\mb{e}_i}^\T} \|_{\infty, 2}  +   \frac{\lambda}{2}  k \sqrt{k|\supp|}\frac{2}{C_{\min}} 
\end{align*}
Using results from Lemma \ref{lem: xew support} and \ref{lem:bound sup xe}, with high probability,
\begin{align*} 
\| \res{\mb{W}_{i\cdot} - \mb{W}_{i\cdot}^*}{\supp \cdot} \|_{\BIF} &\leq  k \sqrt{k|\supp|} \frac{2}{C_{\min}} (\epsilon_1 + \sigma^2 (1 + \epsilon_2 + \epsilon_3) + k \sqrt{k|\supp|}\\ 
&\frac{2}{C_{\min}} 
(\epsilon_4 + \epsilon_5)  +   \frac{\lambda}{2}  k \sqrt{k|\supp|}\frac{2}{C_{\min}} 
\end{align*}
Substituting bounds on $\epsilon_1, \epsilon_2, \epsilon_3, \epsilon_4$ and $\epsilon_5$, we get
\begin{align*} 
&\| \res{\mb{W}_{i\cdot} - \mb{W}_{i\cdot}^*}{\supp \cdot} \|_{\BIF} \leq  k \sqrt{k|\supp|} \frac{2}{C_{\min}} (\frac{\alpha \lambda}{24(1-\alpha)} + \sigma^2 (1 + \frac{\alpha \lambda}{24 (1-\alpha)\sigma^2 \sqrt{k} W_{\max}} +\\
& \frac{\alpha \lambda}{24 (1 - \alpha) \sigma^2 \sqrt{k} \max_{ij} W_{\max}}) \sqrt{k} W_{\max}) + k \sqrt{k|\supp|}  \frac{2}{C_{\min}} (\frac{\alpha \lambda}{24 (1-\alpha)} + \frac{\alpha \lambda}{24 (1-\alpha)})  + \\
&\frac{\lambda}{2}  k \sqrt{k|\supp|}\frac{2}{C_{\min}} 
\end{align*}
Let 
\begin{align*} 
&\delta(k, |\supp|, C_{\min}, \alpha, \lambda, \sigma, W_{\max}) =  k \sqrt{k|\supp|} \frac{2}{C_{\min}} (\frac{\alpha \lambda}{24(1-\alpha)} + \sigma^2 (1 + \frac{\alpha \lambda}{24 (1-\alpha)\sigma^2 \sqrt{k} W_{\max}} +\\
& \frac{\alpha \lambda}{24 (1 - \alpha) \sigma^2 \sqrt{k} \max_{ij} W_{\max}}) \sqrt{k} W_{\max}) + k \sqrt{k|\supp|}  \frac{2}{C_{\min}} (\frac{\alpha \lambda}{24 (1-\alpha)} + \frac{\alpha \lambda}{24 (1-\alpha)})  + \\
&\frac{\lambda}{2}  k \sqrt{k|\supp|}\frac{2}{C_{\min}} 
\end{align*}
Then,
\begin{align*} 
&\| \res{\mb{W}_{i\cdot} - \mb{W}_{i\cdot}^*}{\supp \cdot} \|_{\BIF} \leq  \delta(k, |\supp|, C_{\min}, \alpha, \lambda, \sigma, W_{\max})
\end{align*}

Now, we will characterize $\nash_{\epsilon}(\game)$ by $\mb{W}_{i\cdot}$. In particular, we define
\begin{align*}
	\label{eq:infer eps nash}
	\nash_{\epsilon}(\game) =& \{ \mb{x}^* \in \bigtimes_{i \in \nodes} \actions_i \mid x_i^* = \sum_{j \in \supp} W_{ij} x_j^* , \forall i \in \seq{n}  \}
\end{align*}
We explicitly compute the payoffs to prove that equation \eqref{eq:infer eps nash} indeed recovers $\nash_{\epsilon}(\game)$, i.e., for all $\mb{x}^* \in \nash_{\epsilon}(\game)$
\begin{align*}
&u_i(x_i^*, x_{\minus i}^*) = - \| \sum_{j \in \supp} W_{ij} x_j^* - \sum_{j \in \supp} W_{ij}^* x_j^* \|_2 \geq - \sum_{j \in \supp} \| (W_{ij} - W_{ij}^*) x_j^* \|_2 
\geq - \sum_{j \in \supp} \| W_{ij} - W_{ij}^* \|_{F} \| x_j \|_2 \\
&\geq - |\supp| \delta(k, |\supp|, C_{\min}, \alpha, \lambda, \sigma, W_{\max}) \budget
\end{align*}
Thus the set defined in equation \eqref{eq:infer eps nash} recovers $\epsilon$-PSNE for $\epsilon = |\supp| \delta(k, |\supp|, C_{\min}, \alpha, \lambda, \sigma, W_{\max}) \budget$.

Next, we show that if for each player $i \in \seq{n}$, if $\min_{j \in \supp} \|W_{ij}^*\|_{F} > 2 \delta(k, |\supp|, C_{\min}, \alpha, \lambda, \sigma, W_{\max})$ then we recover the exact structure of the graphical game. Note that if $\min_{j \in \supp} \|W_{ij}^*\|_{F} > 2 \delta(k, |\supp|, \\ C_{\min}, \alpha, \lambda, \sigma, W_{\max})$ then $\|W_{ij}^*\|_{F} > 0$ implies that  $\|W_{ij}\|_{F} > 0$. We have already shown that we do not recover any extra player in the set of in-neighbors $\supp$ and this added condition ensures that for all the players in $\supp$, $\|W_{ij}\|_{F} > 0$. Thus, we recover exact set of players in $\supp$ for each player $i \in \seq{n}$. We recover the exact graphical game by combining the results for all the players. 
\end{proof} 

\paragraph{Sample and Time Complexity.}
\label{sec:sample and time complexity} 
If we have $T > \mathcal{O}(k^5 |\supp|^3 \log (k |\supp^c||\supp|))$ and all other conditions mentioned in Theorem \ref{thm:eps_nash_equilibria}  are satisfied for every player then all our high probability statements are valid for every player $i$. Taking a union bound over $n$ players only adds a factor of $\log n$. Thus the sample complexity for our method is $\mathcal{O}(k^5 |\supp|^3 \log (k |\supp^c||\supp|))$. As for the time complexity, we can formulate the block-regularized multi-variate regression problem as a second order cone programing problem~\cite{obozinski2011support} which can be solved in polynomial time by interior point methods~\cite{boyd2004convex}. 

\paragraph{Concluding Remarks.}
There are two possible future directions for our work. First, it would be interesting to see if the exact Nash equilibria set can be recovered for  graphical games with quadratic payoffs using our method. Second, it would be interesting to extend our method to the continuous-action games with more general payoff functions.

\appendix 

\section{Proofs of Theorems and Lemmas}
\label{appendix:proof of theorems and lemmas}

\subsection{Proof of Lemma \ref{lem:min eigenvalue}}
\paragraph{Lemma \ref{lem:min eigenvalue}}[Positive minimum eigenvalue]
	\emph{$\Lambda_{\min}(\res{\bHhat}{\supp \supp}) > 0$ with probability at least $1 - \exp(-cT\sigma^4 + \mathcal{O}(k |\supp|)) - \exp(-\frac{\sigma^2 T}{128 \budget^2} + \mathcal{O}(k |\supp|)) $ for some constant $c > 0$ where $\Lambda_{\min}$ denotes the minimum eigenvalue.} 
\begin{proof}
	We prove the lemma in two steps. First, recall that $\bH = \frac{1}{T} \sum_{t=1}^T \big( {\mb{x}_{\minus i}^*}^t {{\mb{x}_{\minus i}^*}^t}^\T +  \sigma^2 \mb{I}\big)$ where $\mb{I}$ is identity matrix. Then,
	\begin{align}
	\begin{split}
	\Lambda_{\min}(\res{\bH}{\supp \supp}) &= \Lambda_{\min}(\frac{1}{T}\sum_{t=1}^T (\res{{\mb{x}_{\minus i}^*}^t {{\mb{x}_{\minus i}^*}^t}^\T}{\supp\supp})  + \sigma^2 \mb{I}) \\
	&\text{Using the inequality}\\
	& \text{$\Lambda_{\min}(A+B) \geq \Lambda_{\min}(A) + \Lambda_{\min}(B)$}\\
	&\geq \Lambda_{\min}(\frac{1}{T}\sum_{t=1}^T \res{{\mb{x}_{\minus i}^*}^t {{\mb{x}_{\minus i}^*}^t}^\T}{\supp\supp}) + \sigma^2 \\
	&\geq \sigma^2 > 0
	\end{split}
	\end{align}
	Last inequality follows by noting that $\frac{1}{T}\sum_{t=1}^T (\res{{\mb{x}_{\minus i}^*}^t {{\mb{x}_{\minus i}^*}^t}^\T}{\supp\supp})$ is a positive semi-definite matrix with non-negative eigenvalues. Next, we  prove that if $T > \mathcal{O}\mathcal{O}(\frac{1}{\sigma^2}\max(\budget^2, \frac{1}{\sigma^2}) k |\supp|)$, then $\Lambda_{\min}(\res{\bHhat}{\supp \supp}) > 0$ with high probability. 
	\begin{align*}
			\Lambda_{\min}(\res{\bHhat}{\supp\supp}) &= \min_{\|\mb{y}\|_2 = 1} \mb{y}^\T (\frac{1}{T}\sum_{t=1}^T \res{{\mb{x}^*}^t{{\mb{x}^*}^t}^\T}{\supp\supp} + \res{{\mb{x}^*}^t {\mb{e}^t}^\T}{\supp\supp} + \res{\mb{e}^t {{\mb{x}^*}^t}^\T}{\supp\supp} + \res{{\mb{e}}^t{\mb{e}^t}^\T}{\supp\supp}) \mb{y}\\
			&\geq \min_{\|\mb{y}\|_2 = 1} \mb{y}^\T \frac{1}{T}\sum_{t=1}^T \res{{\mb{x}^*}^t{{\mb{x}^*}^t}^\T}{\supp\supp} \mb{y}+  \min_{\|\mb{y}\|_2 = 1} \mb{y}^\T\frac{1}{T}\sum_{t=1}^T  (\res{{\mb{x}^*}^t {\mb{e}^t}^\T}{\supp\supp} + \res{\mb{e}^t {{\mb{x}^*}^t}^\T}{\supp\supp}) \mb{y}+  \\
			&\min_{\|\mb{y}\|_2 = 1} \mb{y}^\T \frac{1}{T}\sum_{t=1}^T \res{{\mb{e}}^t{\mb{e}^t}^\T}{\supp\supp} \mb{y} \\
			&\text{Noting that $\frac{1}{T}\sum_{t=1}^T (\res{{\mb{x}_{\minus i}^*}^t {{\mb{x}_{\minus i}^*}^t}^\T}{\supp\supp})$ is a positive semidefinite matrix}\\
			&\geq  \min_{\|\mb{y}\|_2 = 1}  \mb{y}^\T \frac{1}{T}\sum_{t=1}^T (\res{{\mb{x}^*}^t {\mb{e}^t}^\T}{\supp\supp} + \res{\mb{e}^t {{\mb{x}^*}^t}^\T}{\supp\supp}) \mb{y}+  \min_{\|\mb{y}\|_2 = 1} \mb{y}^\T \frac{1}{T}\sum_{t=1}^T \res{{\mb{e}}^t{\mb{e}^t}^\T}{\supp\supp} \mb{y}
	\end{align*}
	We define a random variable $R \triangleq = \frac{1}{T} \sum_{t=1}^T \mb{y}^\T(\res{{\mb{x}^*}^t {\mb{e}^t}^\T}{\supp\supp} + \res{\mb{e}^t {{\mb{x}^*}^t}^\T}{\supp\supp})\mb{y}$. Notice that $R$ is a sub-Gaussian random variable with mean $0$ and parameter $ \frac{4 \sum_{t=1}^T a_t^2 \sigma^2}{T^2}$, where $a_t = \mb{y}^\T \res{\mb{x}^*}{\supp.} \leq \budget$.
	Thus, 
	\begin{align*}
		\prob(R \leq -\epsilon) \leq \exp(- \frac{\epsilon^2}{2 \frac{4 \sum_{t=1}^T a_t^2 \sigma^2}{T^2} }) \leq \exp(- \frac{\epsilon^2}{2 \frac{4 \sum_{t=1}^T \budget^2\sigma^2}{T^2} }) = \exp(- \frac{T \epsilon^2}{8 \budget^2 \sigma^2 }) 
	\end{align*}
	Following \emph{$\epsilon$-nets} argument from \cite{vershynin2012close} and covariance matrix concentration for sub-Gaussian random variables, we can write
	\begin{align}
		\prob(\min_{\|\mb{y}\|_2 = 1} \mb{y}^\T \frac{1}{T}\sum_{t=1}^T \res{{\mb{e}}^t{\mb{e}^t}^\T}{\supp\supp} \mb{y} > \sigma^2 - \epsilon) \geq 1 - \exp(-c\epsilon^2T + \mathcal{O}(k |\supp|))
	\end{align}
	and 
	\begin{align}
	\prob(\min_{\|\mb{y}\|_2 = 1}  \mb{y}^\T \frac{1}{T}\sum_{t=1}^T (\res{{\mb{x}^*}^t {\mb{e}^t}^\T}{\supp\supp} + \res{\mb{e}^t {{\mb{x}^*}^t}^\T}{\supp\supp}) \mb{y} \geq -\epsilon) \leq  \exp(- \frac{T \epsilon^2}{8 \budget^2 \sigma^2 } + \mathcal{O}(k |\supp|)) 
	\end{align}
	Thus, choosing $\epsilon = \frac{\sigma^2}{4}$ and choosing $T = \mathcal{O}(\frac{1}{\sigma^2}\max(\budget^2, \frac{1}{\sigma^2}) k |\supp|)$, we get $\Lambda_{\min}(\res{\bHhat}{\supp\supp}) \geq \frac{\sigma^2}{2}$ with high probability.
\end{proof}

\subsection{Proof of Lemma \ref{lemma:mutual_incoherence}}

First, we prove a technical lemma that will be used in Lemma \ref{lemma:mutual_incoherence}.
\begin{lemma}
	\label{lem:control_T moved}
	For any $\delta > 0$, the following holds:
	\begin{align}
	\label{eq:Teq1}
	\begin{split}
	\prob(\| \res{\bHhat}{\supp^c \supp} - \res{\bH}{\supp^c \supp}\|_{\BIO} \geq \delta) \leq 2 D \exp(\frac{-\delta^2 T}{8  f(\sigma, \budget) k^4 |\supp|2} 
	+ \log k |\supp^c| + \log k |\supp| )
	\end{split}
	\end{align}
	\begin{align}
	\label{eq:Teq2}
	\begin{split}
	\prob(\| \res{\bHhat}{\supp \supp} - \res{\bH}{\supp \supp} \|_{\infty, \infty} \geq \delta) \leq 2 D \exp(\frac{-\delta^2 T}{8  f(\sigma, \budget) k^2 |\supp|^2  } 		+ 2 \log k |\supp|)
	\end{split}
	\end{align}
	\begin{align}
	\label{eq:Teq3}
	\begin{split}
	\prob(\|(\res{\bHhat}{\supp \supp})^{-1} - (\res{\bH}{\supp \supp})^{-1}\|_{\infty, \infty} \geq \delta) \leq 
	2D \exp(- \frac{\delta^2 C^4 T }{32 f(\sigma, \budget) k^3 |\supp|^3} + 2 \log k|\supp|) \\
	+ 2D \exp(- \frac{C^2 T }{32  f(\sigma, \budget)k^2 |\supp|^2} + 2 \log k |\supp|)
	\end{split}
	\end{align}
\end{lemma}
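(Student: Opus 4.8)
The plan is to derive all three inequalities from a single per-entry concentration estimate for $\bHhat-\bH$, lift it to the three matrix norms by counting entries and union bounding, and obtain the inverse bound \eqref{eq:Teq3} from the perturbation identity for matrix inverses together with an eigenvalue event. First I would record the algebraic decomposition. Writing $\mb{x}_{\minus i}^t = {\mb{x}_{\minus i}^*}^t + \mb{e}^t$ and treating the equilibria ${\mb{x}_{\minus i}^*}^t$ as fixed while $\mb{e}^t$ is zero-mean, coordinate-wise independent sub-Gaussian with variance proxy $\sigma^2$, taking expectation over the noise gives $\E{\bHhat}=\bH$ and
\begin{align*}
\bHhat-\bH = \frac{1}{T}\sum_{t=1}^{T}\big( {\mb{x}_{\minus i}^*}^t{\mb{e}^t}^\T + \mb{e}^t{{\mb{x}_{\minus i}^*}^t}^\T + \mb{e}^t{\mb{e}^t}^\T - \sigma^2\mb{I}\big).
\end{align*}
Thus each entry $[\bHhat-\bH]_{lm} = \frac{1}{T}\sum_t\big({x_l^*}^t e_m^t + e_l^t {x_m^*}^t + e_l^t e_m^t - \sigma^2\mathbbm{1}[l=m]\big)$ is an average of $T$ independent zero-mean terms: the cross terms are sub-Gaussian with parameter controlled by $\budget^2\sigma^2$ (using $|{x_l^*}^t|\le\budget$ from Assumption \ref{assum:budgeted_actions}), while the centered quadratic terms $e_l^t e_m^t-\sigma^2\mathbbm{1}[l=m]$ are sub-exponential with parameter controlled by $\sigma^4$. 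Packaging these into one sub-exponential constant $f(\sigma,\budget)$ and invoking Bernstein's inequality yields, for each fixed $(l,m)$ and $\theta$ in the small-deviation regime, $\prob(|[\bHhat-\bH]_{lm}|\ge \theta)\le 2D\exp(-\frac{\theta^2 T}{8 f(\sigma,\budget)})$.

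Next I would lift this to the matrix norms. For \eqref{eq:Teq2} the block $\res{\bHhat-\bH}{\supp\supp}$ is $k|\supp|\times k|\supp|$ and $\|\cdot\|_{\infty,\infty}$ is the largest absolute row sum, so it suffices that every entry be at most $\delta/(k|\supp|)$; setting $\theta=\delta/(k|\supp|)$ and union bounding over the $(k|\supp|)^2$ entries folds the entry count into the exponent, giving $-\delta^2 T/(8 f(\sigma,\budget) k^2|\supp|^2)$ together with the $2\log k|\supp|$ term. For \eqref{eq:Teq1}, $\|\cdot\|_{\BIO}$ maximizes, over the $|\supp^c|$ row-blocks of size $k\times k|\supp|$ (each with $k^2|\supp|$ entries), the sum of absolute entries; requiring each entry to be at most $\delta/(k^2|\supp|)$ and union bounding over all $k^2|\supp^c||\supp|$ entries produces the exponent $-\delta^2 T/(8 f(\sigma,\budget) k^4|\supp|^2)$ and the $\log k|\supp^c|+\log k|\supp|$ term.

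For the inverse bound \eqref{eq:Teq3} I would set $A=\res{\bHhat}{\supp\supp}$, $B=\res{\bH}{\supp\supp}$, $\Delta=A-B$ and use $A^{-1}-B^{-1}=-A^{-1}\Delta B^{-1}$. Passing to the spectral norm via $\|\cdot\|_{\infty,\infty}\le\sqrt{k|\supp|}\,\|\cdot\|_{2,2}$ and submultiplicativity gives $\|A^{-1}-B^{-1}\|_{\infty,\infty}\le\sqrt{k|\supp|}\,\|A^{-1}\|_{2,2}\|B^{-1}\|_{2,2}\|\Delta\|_{2,2}$. Since $\Lambda_{\min}(B)\ge C$ we have $\|B^{-1}\|_{2,2}\le 1/C$, and since $\Delta$ is symmetric $\|\Delta\|_{2,2}\le\|\Delta\|_{\infty,\infty}$. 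The random factor $\|A^{-1}\|_{2,2}$ is controlled through an eigenvalue event: if $\|\Delta\|_{\infty,\infty}<C/2$ then $\Lambda_{\min}(A)\ge C-\|\Delta\|_{2,2}\ge C/2$, so $\|A^{-1}\|_{2,2}\le 2/C$. On this event $\|A^{-1}-B^{-1}\|_{\infty,\infty}\ge\delta$ forces $\|\Delta\|_{\infty,\infty}\ge\delta C^2/(2\sqrt{k|\supp|})$, which yields the inclusion
\begin{align*}
\{\|A^{-1}-B^{-1}\|_{\infty,\infty}\ge\delta\}\subseteq\Big\{\|\Delta\|_{\infty,\infty}\ge\tfrac{\delta C^2}{2\sqrt{k|\supp|}}\Big\}\cup\Big\{\|\Delta\|_{\infty,\infty}\ge\tfrac{C}{2}\Big\}.
\end{align*}
Applying \eqref{eq:Teq2} to each piece (substituting $\delta\mapsto\delta C^2/(2\sqrt{k|\supp|})$ for the first and $\delta\mapsto C/2$ for the second) produces exactly the two exponential terms, with denominators $32 f(\sigma,\budget)k^3|\supp|^3$ and $32 f(\sigma,\budget)k^2|\supp|^2$ respectively.

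The hard part will be the per-entry estimate: one must verify that the mixed sub-Gaussian cross terms and the sub-exponential quadratic noise terms combine into a single sub-exponential summand with the advertised parameter $f(\sigma,\budget)$, and that the deviations of interest ($\delta/(k|\supp|)$, $\delta/(k^2|\supp|)$, and $C/2$) remain within the Bernstein sub-Gaussian regime so that the squared-deviation exponent is the operative one. Once this estimate is in hand with the correct constant, the remaining steps are pure entry-counting for the two norm conversions and the eigenvalue-event decomposition for the inverse.
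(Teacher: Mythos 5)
Your proposal is correct and follows essentially the same route as the paper: the same decomposition of $\bHhat-\bH$ into sub-Gaussian cross terms and sub-exponential quadratic terms with a unified constant $f(\sigma,\budget)$, the same per-entry Bernstein bound lifted to the two norms by choosing $\theta=\delta/(k|\supp|)$ and $\theta=\delta/(k^2|\supp|)$ with entry-counting union bounds, and the same inverse-perturbation identity combined with an eigenvalue event $\Lambda_{\min}(\res{\bHhat}{\supp\supp})\ge C/2$ to obtain the two exponential terms in \eqref{eq:Teq3}. The only cosmetic difference is that you organize the inverse bound as an explicit event inclusion while the paper bounds the factors directly, which is mathematically equivalent.
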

\begin{proof}
	Note that,
	\begin{align*}
	\begin{split}
	[ \res{\bHhat}{\supp^c \supp} - \res{\bH}{\supp^c \supp}]_{jk} &= Z_{jk} \\
	&= [\frac{1}{T} \sum_{t=1}^T \big( {\mb{x}_{\minus i}^*}^t {{\mb{e}_{\minus i}}^t}^\T + {\mb{e}_{\minus i}}^t {{\mb{x}_{\minus i}^*}^t}^\T + {{\mb{e}_{\minus i}}^t}{{\mb{e}_{\minus i}}^t}^\T \\
	&- \sigma^2 \mb{I} \big)]_{ij} \\
	&= \frac{1}{T}\sum_{t=1}^T Z_{jk}^t
	\end{split}
	\end{align*}
	
	We define three random variables $R_1 \triangleq [\frac{1}{T} \sum_{t=1}^T \big( {\mb{x}_{\minus i}^*}^t {{\mb{e}_{\minus i}}^t}^\T\big)]_{jk}, R_2 = [\frac{1}{T} \sum_{t=1}^T\big( {\mb{e}_{\minus i}}^t {{\mb{x}_{\minus i}^*}^t}^\T \big)]_{jk}$ and $R_3 \triangleq [\frac{1}{T} \sum_{t=1}^T \big({{\mb{e}_{\minus i}}^t}{{\mb{e}_{\minus i}}^t}^\T - \sigma^2 \mb{I} \big)]_{jk}$. We will provide a separate bound on these random variables. Note that $R_1$ is a sub-Gaussian random variable with $0$ mean and parameter $\frac{\sigma^2 \sum_{t=1}^T {{{\mb{x}_{\minus i}}_j^*}^t}^2}{T^2}$ and $R_2$ is a sub-Gaussian random variable with $0$ mean and parameter $\frac{\sigma^2 \sum_{t=1}^T {{{\mb{x}_{\minus i}}_k^*}^t}^2}{T^2}$. Thus,
		\begin{align*}
		\prob(|R_1| \geq \epsilon) \leq \exp(- \frac{\epsilon^2}{\frac{\sigma^2 \sum_{t=1}^T {{{\mb{x}_{\minus i}}_j^*}^t}^2}{T^2}}) \leq 2\exp(- \frac{T \epsilon^2}{ \budget^2 \sigma^2 }) 
		\end{align*}
		and
		\begin{align*}
		\prob(|R_2| \geq \epsilon) \leq \exp(- \frac{\epsilon^2}{\frac{\sigma^2 \sum_{t=1}^T {{{\mb{x}_{\minus i}}_k^*}^t}^2}{T^2}}) \leq 2\exp(- \frac{T \epsilon^2}{ \budget^2 \sigma^2 }) 
		\end{align*}
		$R_3$ is a sub-exponential random variable (check Lemma~\ref{lem: xew support} and \ref{lem:subexponential}). Thus, for $0 < \epsilon < \sigma^2$.
		\begin{align*}
		\prob(|R_3| \geq \epsilon) \leq 2\exp(- \frac{T \epsilon^2}{ 8 \sigma^4 }) 
		\end{align*}

	Now,
	\begin{align*}
	\| \res{\bHhat}{\supp^c \supp} - \res{\bH}{\supp^c \supp} \|_{\BIO} = \max_{i \in \supp^c} (\sum_{j \in \ind(i)} \sum_{k= 1}^k |Z_{jk}|)
	\end{align*}
	Combining the results for random variables $R_1, R_2$ and $R_3$ and applying union bound, we get
	\begin{align*}
	&\prob[|Z_{jk}| \geq \epsilon] \leq 2 D \exp(\frac{-\epsilon^2 T}{8 f(\sigma, \budget)}) 
	\end{align*}
	where $D > 0$ is a constant and $f(\sigma, \budget) = \max(\frac{\budget^2\sigma^2}{8}, \sigma^4)$.
	Taking $\epsilon = \frac{\delta}{k^2|\supp|} $ for any $i \in \supp^c$.
	\begin{align*}
	&\prob(|Z_{jk}| \geq \frac{\delta}{k^2|\supp|} ) \leq 2 D \exp(\frac{-\delta^2 T}{8 f(\sigma, \budget) k^4 |\supp|^2})
	\end{align*}
	Using the union bound over $\forall i \in \supp^c, j \in \ind(i), \forall l \in \supp, k \in \ind(l)$ we can write,
	\begin{align*}
	\begin{split}
	&\prob(\| \res{\bHhat}{\supp^c \supp} - \res{\bH}{\supp^c \supp} \|_{\BIO} \geq \delta ) \leq  |\supp^c| |\supp| k^2  2 D \exp(\frac{-\delta^2 T}{8  f(\sigma, \budget) k^4 |\supp|^2}) \\
	&\leq 2D \exp(\frac{-\delta^2 T}{8  f(\sigma, \budget) k^4 |\supp|^2} + \log k^2 |\supp^c| |\supp| )
	\end{split}
	\end{align*}
	Similarly we can prove equation~\eqref{eq:Teq2},
	\begin{align*}
	&\prob(\| \res{\bHhat}{\supp \supp} - \res{\bH}{\supp \supp} \|_{\infty, \infty} \geq \delta ) \leq k^2|\supp|^2 \prob(|Z_{jk}| \geq \frac{\delta}{k|\supp|}  ) \leq  2D \exp(\frac{-\delta^2 T}{8 f(\sigma, \budget) k^2 |\supp|^2  } + 2 \log k |\supp|)
	\end{align*}
	Now we prove equation~\eqref{eq:Teq3}. Note that,
	\begin{align*}
	\begin{split}
	&\|(\res{\bHhat}{\supp \supp})^{-1} - (\res{\bH}{\supp \supp})^{-1}\|_{\infty, \infty} = \\
	&\|(\res{\bH}{\supp \supp})^{-1} [\res{\bH}{\supp \supp}
	- \res{\bHhat}{\supp \supp}] (\res{\bHhat}{\supp \supp})^{-1}\|_{\infty, \infty} \\
	&\leq \sqrt{k|\supp|} \|(\res{\bH}{\supp \supp})^{-1} [\res{\bH}{\supp \supp} - \res{\bHhat}{\supp \supp}] (\res{\bHhat}{\supp \supp})^{-1}\|_{2,2} \\
	&\leq \sqrt{k|\supp|} \|(\res{\bH}{\supp \supp})^{-1}\|_{2,2} \|[\res{\bH}{\supp \supp} - \res{\bHhat}{\supp \supp}]\|_{2,2} \|(\res{\bHhat}{\supp \supp})^{-1}\|_{2,2} \\
	&\leq \frac{\sqrt{k|\supp|}}{C_{\min}}\|[\res{\bH}{\supp \supp} - \res{\bHhat}{\supp \supp}]\|_{2,2} \|(\res{\bHhat}{\supp \supp})^{-1}\|_{2,2}
	\end{split}
	\end{align*}
	Note that,
	$\prob(\Lambda_{\min}(\res{\bHhat}{\supp \supp}) \geq C_{\min} -
	\delta] \geq 1 - 2 \exp(- \frac{\delta^2 T }{8 k^2 |\supp|^2} + 2 \log k |\supp|) $. Taking $\delta = \frac{C_{\min}}{2}$, we get
	$\prob(\Lambda_{\min}(\res{\bHhat}{\supp \supp}) \geq \frac{C_{\min}}{2} ) \geq 1 - 2 \exp(- \frac{C_{\min}^2 T }{32 k^2|\supp|^2} + 2 \log k |\supp|) $.
	This means that,
	\begin{align}
	\label{eq:inverse_eigval}
	\begin{split}
	\prob(\|(\res{\bHhat}{\supp \supp})^{-1}\|_{2,2} \leq \frac{2}{C_{\min}} ) &\geq 1 - 2 \exp(- \frac{C_{\min}^2 T }{32 k^2 |\supp|^2} + \\
	&2 \log k |\supp|) \ .
	\end{split}
	\end{align}
	Furthermore,
	\begin{align*}
	\begin{split}
	\prob(\| \res{\bH}{\supp \supp} - \res{\bHhat}{\supp \supp} \|_{2,2} &\geq \epsilon) \leq 2 D \exp(- \frac{\epsilon^2 T }{8  f(\sigma, \budget)k^2 |\supp|^2} + 2 \log k |\supp|) \ 
	\end{split}
	\end{align*}
	Taking $\epsilon = \delta \frac{C_{\min}^2}{2\sqrt{k|\supp|}}$, we get:
	\begin{align*}
	\begin{split}
	\prob(\| \bH_{\supp \supp} - \bHhat_{\supp \supp} \|_{2,2} &\geq \delta \frac{C_{\min}^2}{2\sqrt{k|\supp|}}) \leq 2D \exp(- \frac{\delta^2 C_{\min}^4 T}{32k^3  f(\sigma, \budget)|\supp|^3} + 2 \log k |\supp|) 
	\end{split}
	\end{align*}
	\noindent It follows that,
	\begin{align*}
	\begin{split}
	\prob(	\|(\res{\bHhat}{\supp \supp})^{-1} - (\res{\bH}{\supp \supp})^{-1}\|_{\infty, \infty} \leq \delta) \geq& 1 -  
	2 \exp(- \frac{\delta^2 C_{\min}^4 N }{32 k^3 |\supp|^3} + 2 \log k |\supp| \\
	&- 2D \exp(- \frac{C_{\min}^2 T }{32  f(\sigma, \budget)k^2 |\supp|^2} + 2 \log k |\supp|)
	\end{split}
	\end{align*}
\end{proof} 
 
Now, we provide the detailed proof of Lemma \ref{lemma:mutual_incoherence}.
\paragraph{Lemma \ref{lemma:mutual_incoherence}}[Mutual incoherence in sample]
	\emph{If $\|\res{\bH}{\supp^c\supp} \res{\bH}{\supp \supp}^{-1}\|_{\BIO} \leq 1 - \alpha$ for $\alpha \in (0, 1]$ then $\|\res{\bHhat}{\supp^c \supp}\res{\bHhat}{\supp \supp}^{-1}\|_{\BIO} \leq 1 - \frac{\alpha}{2}$} with probability at least $ 1 - \mathcal{O}( \exp(\frac{-K T}{ k^5  |\supp|^3} + \log k |\supp^c| + \log k |\supp| )) $ for some $K > 0$. 
\begin{proof}
	We can rewrite $\bHhat_{\supp^c \supp}(\bHhat_{\supp \supp})^{-1}$ as the sum of four terms defined as:
	\begin{align}
	\label{eq:sample mutual incoherence}
	\begin{split}
	&\res{\bHhat}{\supp^c \supp}(\res{\bHhat}{\supp \supp})^{-1} = T_1 + T_2 + T_3 + T4  \\
	&\|\res{\bHhat}{\supp^c \supp}(\res{\bHhat}{\supp \supp})^{-1}\|_{\BIO} \leq \|T_1\|_{\BIO} + \|T_2\|_{\BIO} +  \|T_3\|_{\BIO} + \|T_4\|_{\BIO}
	\end{split}
	\end{align}
	where,
	\begin{align*}
	T_1 &\triangleq \res{\bH}{\supp^c \supp}[(\res{\bHhat}{\supp \supp})^{-1} - \res{\bH}{\supp \supp}^{-1} ] \\
	T_2 &\triangleq [\res{\bHhat}{\supp^c \supp} - \res{\bH}{\supp^c \supp}]\res{\bH}{\supp \supp}^{-1}\\
	T_3 &\triangleq [\res{\bHhat}{\supp^c \supp} - \res{\bH}{\supp^c \supp}][(\res{\bHhat}{\supp \supp})^{-1} - \res{\bH}{\supp \supp}^{-1} ] \\
	T_4 &\triangleq \res{\bH}{\supp^c \supp}\res{\bH}{\supp \supp}^{-1}
	\end{align*}
	and each $T_i$ is treated as a row-partitioned block matrix of $ |\supp^c|$ blocks with each block containing $k$ rows.
	From Mutual incoherence Assumption, it is clear that $\|T_4\|_{\BIO} \leq 1 - \alpha$. We control the other three terms by using results from Lemma~\ref{lem:control_T moved}.

	{\bf Controlling the first term of equation~\eqref{eq:sample mutual incoherence}.} 
	We can write $T_1$ as,
	\begin{align*}
	&T_1 = - \res{\bH}{\supp^c \supp} (\res{\bH}{\supp \supp})^{-1}[\res{\bHhat}{\supp \supp} - \res{\bH}{\supp \supp}](\res{\bHhat}{\supp \supp})^{-1} 
	\end{align*}
	\noindent then,
	\begin{align*}
	\begin{split}
	\|T_1\|_{\BIO} &= \| \res{\bH}{\supp^c \supp} (\res{\bH}{\supp \supp})^{-1}[\res{\bHhat}{\supp \supp} - \res{\bH}{\supp \supp}](\res{\bHhat}{\supp \supp})^{-1} \|_{\BIO} \\
	&\leq \| \res{\bH}{\supp^c \supp} (\res{\bH}{\supp \supp})^{-1} \|_{\BIO} \|[\res{\bHhat}{\supp \supp} - \res{\bH}{\supp \supp}]\|_{\infty, \infty} \|(\res{\bHhat}{\supp \supp})^{-1} \|_{\infty, \infty} \\
	&\leq (1 - \alpha) \|[\res{\bHhat}{\supp \supp} - \res{\bH}{\supp \supp}]\|_{\infty, \infty} \sqrt{k|\supp|}\|(\res{\bHhat}{\supp \supp})^{-1} \|_{2, 2}
	\end{split}
	\end{align*}
	Now using equation~\eqref{eq:inverse_eigval} and equation~\eqref{eq:Teq2} with $\delta = \frac{\alpha C_{\min}}{12\sqrt{k|\supp|}(1 - \alpha)}$ we can say that,
	\begin{align*}
	\begin{split}
	\prob[\|T_1\|_{\BIO} \leq \frac{\alpha}{6} ] \geq&  1 -  2 \exp(- \frac{C_{\min}^2 T }{32 k^2 |\supp|^2} + 2 \log k |\supp|)  - 2 \exp(-K\frac{T\alpha^2C_{\min}^2}{144 (1-\alpha)^2 k^3 |\supp|^3} + \\
	& 2 \log k |\supp|)  
	\end{split}
	\end{align*}
	{\bf Controlling the second term of equation~\eqref{eq:sample mutual incoherence}.}
	We can write $\|T_2\|_{\BIO}$ as,
	\begin{align*}
	\begin{split}
	\|T_2\|_{\BIO} &= \|  [\res{\bHhat}{\supp^c \supp} - \res{\bH}{\supp^c \supp}]\res{\bH}{\supp \supp}^{-1} \|_{\BIO} \\
	&\leq \|  [\res{\bHhat}{\supp^c \supp} - \res{\bH}{\supp^c \supp}] \|_{\BIO} \|\res{\bH}{\supp \supp}^{-1} \|_{\infty, \infty}\\
	&\leq \|  [\res{\bHhat}{\supp^c \supp} - \res{\bH}{\supp^c \supp}] \|_{\BIO} \sqrt{k|\supp|}\|\res{\bH}{\supp \supp}^{-1} \|_{2,2}\\
	&\leq \frac{\sqrt{k|\supp|}}{C_{\min}}\|  [\res{\bHhat}{\supp^c \supp} - \res{\bH}{\supp^c \supp}] \|_{\BIO} 
	\end{split}
	\end{align*}
	Using equation~\eqref{eq:Teq1} with $\delta = \frac{\alpha C_{\min}}{6 \sqrt{k|\supp|}}$ we get,
	\begin{align*}
	\begin{split} 
	&\prob(\|T_2\|_{\BIO} \leq \frac{\alpha}{6} ) \geq  1 - 2 D \exp(\frac{-\alpha^2 C_{\min}^2 T}{288  f(\sigma, \budget) k^5 |\supp|^3} +  \log k |\supp^c| + \log k |\supp| )
	\end{split}
	\end{align*}
	{\bf Controlling the third term of equation~\eqref{eq:sample mutual incoherence}.} 
	We can write $\|T_3\|_{\BIO}$ as, 
	\begin{align*}
	\begin{split}
	&\|T_3\|_{\BIO} \leq 
	\|[\res{\bHhat}{\supp^c \supp} - \res{\bH}{\supp^c \supp}] \|_{\BIO} \|[(\res{\bHhat}{\supp \supp})^{-1} - \res{\bH}{\supp \supp}^{-1} ]\|_{\infty, \infty} 
	\end{split}
	\end{align*}
	\noindent Using equation~\eqref{eq:Teq1} and~\eqref{eq:Teq3} both with $\delta = \sqrt{\frac{\alpha}{6}}$, we get
	\begin{align*}
	\begin{split}
	&\prob(\|T_3\|_{\BIO} \leq \frac{\alpha}{6} )\geq  1 -  2 D\exp(- \frac{\delta^2 C_{\min}^4 T }{32  f(\sigma, \budget) k^3 |\supp|^3} + 2 \log k |\supp|) - 2D \exp(- \frac{C_{\min}^2 T }{32 f(\sigma, \budget) k^2 |\supp|^2} + 2 \log k |\supp|) \\
	&- 2 D \exp(\frac{-\alpha T}{48 f(\sigma, \budget) (k^3 |\supp|)^2} + \log k |\supp^c| + \log k |\supp| )
	\end{split}
	\end{align*}
	\noindent Putting everything together we get,
	\begin{align*}
	\begin{split}
	&\prob[\| \res{\bHhat}{\supp^c \supp} \res{\bHhat}{\supp \supp}^{-1} \|_{\BIO} \leq 1 - \frac{\alpha}{2}] \geq 1 - \mathcal{O}( \exp(\frac{-K T}{ f(\sigma, \budget) k^5 |\supp|^3} + \log k |\supp^c| + \log k |\supp| )
	\end{split}
	\end{align*}
	which approaches $1$ as long as we have $N > \mathcal{O}(k^5  d^3 \log nk)$
\end{proof}

\subsection{Proof of Lemma \ref{lem: xew support}}
\paragraph{Lemma \ref{lem: xew support}}[Bound on \emph{$\| \res{\hE{\mb{x}_{\minus i} {\mb{e}_{\minus i}}^\T}}{\supp \supp} \res{\mb{W}_{i.}^*}{\supp\cdot} \|_{\infty, 2}$}]
\emph{For some $\epsilon_1 > 0, 0 < \epsilon_2 < 8$ and $\epsilon_3 < 8 \frac{\sqrt{|\supp|} \max_{ij} |\mb{W}_{ij}^*|}{\min_{ij}|\mb{W}_{ij}^*|}$,
		\begin{align}
		\begin{split}
		&\prob\big(\| \res{\hE{\mb{x}_{\minus i} {\mb{e}_{\minus i}}^\T   }}{\supp \supp} \res{\mb{W}_{i.}^*}{\supp\cdot} \|_{\infty, 2} >  \epsilon_1 + \sigma^2 (1 + \epsilon_2 +\epsilon_3 ) \max_{i \in \ind(l), l \in \supp} \sqrt{\sum_{j=1}^{k} |\mb{W}_{ij}^*|^2} \big) \leq \\
		&\exp(-\frac{\epsilon_1^2 T}{2k \sigma^2 \budget^2 \max_{j}\sum_{k=1}^{|\supp|} {\mb{W}_{kj}^*}^2 } + \log(2 k^2 |\supp|) ) + k^2 |\supp| \exp(-\frac{T\epsilon_2^2}{64}) + \sum_{\substack{i \in \ind(l)\\ l \in \supp}} \sum_{j=1}^{k} \exp(-\frac{T\epsilon_3^2 |\mb{W}_{ij}^*|}{64 |\sqrt{\sum_{\substack{k=1 \\ k \ne i}}^{|\supp|}  {\mb{W}_{kj}^*}^2}|})
		\end{split}
		\end{align}}
\begin{proof}
	Note that,
	\begin{align}
	\begin{split}
	&\| \res{\hE{\mb{x}_{\minus i} {\mb{e}_{\minus i}}^\T}}{\supp \supp} \res{\mb{W}_{i.}^*}{\supp\cdot} \|_{\infty, 2} \quad\quad\phantom{aaaaaaaaaaaaaaaaaaaaaa} \\
	&= \| \res{\hE{\mb{x}_{\minus i}^* {\mb{e}_{\minus i}}^\T  + \mb{e}_{\minus i} {\mb{e}_{\minus i}}^\T  }}{\supp \supp} \res{\mb{W}_{i.}^*}{\supp\cdot} \|_{\infty, 2} \\
	&\leq \| \res{\hE{\mb{x}_{\minus i}^* {\mb{e}_{\minus i}}^\T   }}{\supp \supp} \res{\mb{W}_{i.}^*}{\supp\cdot} \|_{\infty, 2} +  \| \res{\hE{\mb{e}_{\minus i} {\mb{e}_{\minus i}}^\T  }}{\supp \supp} \res{\mb{W}_{i.}^*}{\supp\cdot} \|_{\infty, 2}  
	\end{split}
	\end{align}
	Again, we will bound both terms separately.
	\paragraph{Bound on $\| \res{\hE{\mb{x}_{\minus i}^* {\mb{e}_{\minus i}}^\T   }}{\supp \supp} \res{\mb{W}_{i.}^*}{\supp\cdot} \|_{\infty, 2}$.} 
	For simplicity, let $\res{\mb{x}_{\minus i}^*}{\supp} = \mb{y}^* \in \real^{|\supp|k \times 1}$,  $\res{\mb{e}_{\minus i}}{\supp} = \mb{u} \in \real^{|\supp|k \times 1}$ and $\res{\mb{W}_{i.}^*}{\supp\cdot} = \mb{W}^* \in \real^{|\supp|k\times k}$. We define a random variable $R$ and then,
	
	\begin{align}
	\begin{split}
	R &\triangleq \hE{\mb{y}^* \mb{u}^\T \mb{W}^*}_{ij} \\
	&= \hE{\sum_{k=1}^{|\supp|} \mb{y}_i^* \mb{u}_k \mb{W}_{kj}^*} \\
	&= \frac{1}{T} \sum_{t=1}^T \big( \sum_{k=1}^{|\supp|} {\mb{y}_i^*}^t \mb{u}_k^t \mb{W}_{kj}^*  \big)
	\end{split}
	\end{align}
	
	For a given ${\mb{y}_i^t}^* , \forall t\in \seq{T}$, $R$ is a sub-Gaussian random variable with $0$ mean and parameter $\frac{\sigma^2}{T^2} \sum_{t=1}^T \sum_{k=1}^{|\supp|} \big( {\mb{y}_i^*}^t \mb{W}_{kj}^* \big)^2$. Thus, for some $\epsilon_1 > 0$, we can use a tail bound on a sub-Gaussian random variable:
	
	\begin{align}
	\begin{split}
	\prob_{\cdot\mid {\mb{y}_i^t}^* }(|R| > \epsilon_1) &\leq 2 \exp(-\frac{\epsilon_1^2 T^2}{2\sigma^2 \sum_{t=1}^T \sum_{k=1}^{|\supp|} \big( {\mb{y}_i^*}^t \mb{W}_{kj}^* \big)^2 }) \\
	&\leq 2 \exp(-\frac{\epsilon_1^2 T}{2\sigma^2 \budget^2 \sum_{k=1}^{|\supp|} {\mb{W}_{kj}^*}^2 })
	\end{split}
	\end{align}
	where last inequality holds because ${\mb{y}_i^*}^t \leq \budget, \forall t \in \seq{T}$. Therefore,
	\begin{align}
	\begin{split}
	\prob(|R| > \epsilon_1) &= \texttt{E}_{{\mb{y}_i^t}^* }\big( \prob_{\cdot\mid {\mb{y}_i^t}^* }(|R| > \epsilon_1) \big) \\
	&\leq 2 \exp(-\frac{\epsilon_1^2 T}{2\sigma^2 \budget^2 \sum_{k=1}^{|\supp|} {\mb{W}_{kj}^*}^2 })
	\end{split}
	\end{align}
	Taking union bound across $i \in \ind(l), \forall l\in \supp$ and $j \in \seq{k}$, we get
	\begin{align}
	\begin{split}
	&\prob(\| \res{\hE{\mb{x}_{\minus i}^* {\mb{e}_{\minus i}}^\T   }}{\supp \supp} \res{\mb{W}_{i.}^*}{\supp\cdot} \|_{\infty, 2} > \epsilon_1) \phantom{aaaaaaaaaaaaa}\\
	&\leq 2 k^2 |\supp| \exp(-\frac{\epsilon_1^2 T}{2k \sigma^2 \budget^2 \max_{j}\sum_{k=1}^{|\supp|} {\mb{W}_{kj}^*}^2 }) \\
	&= \exp(-\frac{\epsilon_1^2 T}{2k \sigma^2 \budget^2 \max_{j}\sum_{k=1}^{|\supp|} {\mb{W}_{kj}^*}^2 } + \log(2 k^2 |\supp|) )
	\end{split}
	\end{align}
	
	\paragraph{Bound on $\| \res{\hE{\mb{e}_{\minus i} {\mb{e}_{\minus i}}^\T   }}{\supp \supp} \res{\mb{W}_{i.}^*}{\supp\cdot} \|_{\infty, 2}$.}
	Again for simplicity, let  $\res{\mb{e}_{\minus i}}{\supp} = \mb{u} \in \real^{|\supp|k \times 1}$ and $\res{\mb{W}_{i.}^*}{\supp\cdot} = \mb{W}^* \in \real^{|\supp|k\times k}$.We define a random variable $R$ and then,
	
	\begin{align}
	\begin{split}
	R &\triangleq \hE{\mb{u}\mb{u}^\T \mb{W}^*}_{ij} \\
	&= \frac{1}{T} \sum_{t=1}^T \big( \sum_{k=1}^{|\supp|} \mb{u}_i^t \mb{u}_k^t \mb{W}^*_{kj} \big) \\
	&= \frac{1}{T} \sum_{t=1}^T \mb{u}_i^t \mb{u}_i^t \mb{W}_{ij}^* + \frac{1}{T} \sum_{t=1}^T \big( \sum_{\substack{k=1 \\ k \ne i}}^{|\supp|} \mb{u}_i^t \mb{u}_k^t \mb{W}_{kj}^* \big) \\
	\end{split}
	\end{align}
	
	We will bound the random variable $R$ in two steps. First, we will bound the random variable $R_1 \triangleq \frac{1}{T} \sum_{t=1}^T \mb{u}_i^t \mb{u}_i^t \mb{W}_{ij}^*$ and then we will bound the random variable $R_2 \triangleq \frac{1}{T} \sum_{t=1}^T \big( \sum_{\substack{k=1 \\ k \ne i}}^{|\supp|} \mb{u}_i^t \mb{u}_k^t \mb{W}_{kj}^* \big)$.
	\paragraph{Bound on $R1$}
	We observe that 
	\begin{align}
	\begin{split}
	R_1 = \frac{\sigma^2 \mb{W}_{ij}^*}{T} \sum_{t=1}^T \frac{\mb{u}_i^t}{\sigma} \frac{\mb{u}_i^t}{\sigma} 
	\end{split}
	\end{align}
	 We define a random variable $ y \triangleq \frac{\mb{u}_i^t}{\sigma} $. Note that $y$ is a sub-Gaussian random variable with $0$ mean and parameter $1$. We prove in Lemma \ref{lem:y^2 is subexponential} that $y^2$ is a sub-exponential random variable with parameter $(4\sqrt{2}, 4)$.
	Therefore, we can use a Bernstein type bound on $(\frac{\mb{u}_i^t}{\sigma})^2$, thus for some $\epsilon_2 \in (0, 8)$,
	\begin{align}
		\prob(\mid \frac{1}{T} \sum_{t=1}^T (\frac{\mb{u}_i^t}{\sigma} \frac{\mb{u}_i^t}{\sigma} - \E{\frac{\mb{u}_i^t}{\sigma} \frac{\mb{u}_i^t}{\sigma}} )\mid > \epsilon_2) \leq 2 \exp(-\frac{T\epsilon_2^2}{64})
	\end{align}
	Furthermore, note that $\E{\frac{\mb{u}_i^t}{\sigma} \frac{\mb{u}_i^t}{\sigma}} \leq 1$, thus
	\begin{align}
	\prob(\mid \frac{1}{T} \sum_{t=1}^T (\frac{\mb{u}_i^t}{\sigma} \frac{\mb{u}_i^t}{\sigma} - 1 )\mid > \epsilon_2) \leq 2 \exp(-\frac{T\epsilon_2^2}{64})
	\end{align}

	\begin{align}
	\begin{split}
	&\prob(\sigma^2 |\mb{W}_{ij}^*| \mid \frac{1}{T} \sum_{t=1}^T \frac{\mb{u}_i^t}{\sigma} \frac{\mb{u}_i^t}{\sigma} - 1 \mid > \sigma^2 |\mb{W}_{ij}^*| \epsilon_2) \leq 2 \exp(-\frac{T\epsilon_2^2}{64}) 
	\end{split}
	\end{align} 
	In other words,
	\begin{align}
	\begin{split}
	&\prob(R_1  < \sigma^2 |\mb{W}_{ij}^*| - \sigma^2 |\mb{W}_{ij}^*| \epsilon_2 \vee R_1  > \sigma^2 |\mb{W}_{ij}^*| +  \sigma^2 |\mb{W}_{ij}^*| \epsilon_2 ) \leq 2 \exp(-\frac{T\epsilon_2^2}{64}) 
	\end{split}
	\end{align} 
	\paragraph{Bound on $R2$}
	We observe that
	
	\begin{align}
	\begin{split}
	R_2 =  \sigma^2 \sqrt{\sum_{\substack{k=1 \\ k \ne i}}^{|\supp|}  {\mb{W}_{kj}^*}^2} \frac{1}{T} \sum_{t=1}^T \frac{\mb{u}_i^t}{\sigma} \frac{ \sum_{\substack{k=1 \\ k \ne i}}^{|\supp|}  \mb{u}_k^t \mb{W}_{kj}^* }{\sigma\sqrt{\sum_{\substack{k=1 \\ k \ne i}}^{|\supp|} {\mb{W}_{kj}^*}^2 }}
	\end{split}
	\end{align}	
	Here $\frac{\mb{u}_i^t}{\sigma}$ and $\frac{ \sum_{\substack{k=1 \\ k \ne i}}^{|\supp|}  \mb{u}_k^t \mb{W}_{kj}^* }{\sigma\sqrt{\sum_{\substack{k=1 \\ k \ne i}}^{|\supp|} {\mb{W}_{kj}^*}^2 }}$ are independent sub-Gaussian random variables with $0$ mean and parameter $1$. Thus, similar to Lemma \ref{lem:bound sup xe}, we can use a Bernstein type tail bound for the sum of sub-exponential random variables. For some $\epsilon_3 < 8$,
	
	\begin{align}
	\begin{split}
	&\prob(|\frac{1}{T} \sum_{t=1}^T \frac{\mb{u}_i^t}{\sigma} \frac{ \sum_{\substack{k=1 \\ k \ne i}}^{|\supp|}  \mb{u}_k^t \mb{W}_{kj}^* }{\sigma\sqrt{\sum_{\substack{k=1 \\ k \ne i}}^{|\supp|} {\mb{W}_{kj}^*}^2 }}| > \epsilon_3) \leq 2 \exp(-\frac{T\epsilon_3^2}{64}) 
	\end{split}
	\end{align}
	Or for some  $\epsilon_3 < 8\sigma^2 |\sqrt{\sum_{\substack{k=1 \\ k \ne i}}^{|\supp|}  {\mb{W}_{kj}^*}^2}| $,
	\begin{align}
	\begin{split}
	&\prob(|R_2| > \epsilon_3) \leq 2 \exp(-\frac{T\epsilon_3^2}{64\sigma^4 \sum_{\substack{k=1 \\ k \ne i}}^{|\supp|}  {\mb{W}_{kj}^*}^2})
	\end{split}
	\end{align}
	
	Combining the bounds on $R_1$ and $R_2$ and taking a union bound, we get
	
	\begin{align}
	\begin{split}
	&\prob(R < \sigma^2 |\mb{W}_{ij}^*| - \sigma^2 |\mb{W}_{ij}^*| \epsilon_2 - \epsilon_3 \vee R > \sigma^2 |\mb{W}_{ij}^*| + \sigma^2 |\mb{W}_{ij}^*| \epsilon_2 + \epsilon_3  ) \leq 2 \exp(-\frac{T\epsilon_2^2}{8})  \\
	&+ 2 \exp(-\frac{T\epsilon_3^2}{64\sigma^4 |\sum_{\substack{k=1 \\ k \ne i}}^{|\supp|}  {\mb{W}_{kj}^*}^2|})
	\end{split}
	\end{align}
	Or for some $\epsilon_3 < \frac{8 |\sqrt{\sum_{\substack{k=1 \\ k \ne i}}^{|\supp|}  {\mb{W}_{kj}^*}^2}|}{|\mb{W}_{ij}^*|}$,
	\begin{align}
	\begin{split}
	&\prob(R < \sigma^2 |\mb{W}_{ij}^*|(1 - \epsilon_2 - \epsilon_3) \vee R > \sigma^2 |\mb{W}_{ij}^*| (1 + \epsilon_2 + \epsilon_3 ) \leq 2 \exp(-\frac{T\epsilon_2^2}{8})  \\
	&+ 2 \exp(-\frac{T\epsilon_3^2 |\mb{W}_{ij}^*|}{64 |\sum_{\substack{k=1 \\ k \ne i}}^{|\supp|}  {\mb{W}_{kj}^*}^2|})
	\end{split}
	\end{align}
	Taking one sided union bound across $i \in \ind(l), \forall l\in \supp$ and $j \in \seq{k}$, we get
	
	\begin{align}
	\begin{split}
	&\prob(\| \res{\hE{\mb{e}_{\minus i} {\mb{e}_{\minus i}}^\T   }}{\supp \supp} \res{\mb{W}_{i.}^*}{\supp\cdot} \|_{\infty, 2} > \sigma^2 (1 + \epsilon_2 + \epsilon_3 ) \max_{i \in \ind(l), l \in \supp} \sqrt{\sum_{j=1}^{k} |\mb{W}_{ij}^*|^2} ) \\
	&\leq k^2 |\supp| \exp(-\frac{T\epsilon_2^2}{8}) + \sum_{i \in \ind(l), l \in \supp} \sum_{j=1}^{k} \exp(-\frac{T\epsilon_3^2 |\mb{W}_{ij}^*|}{64 |\sum_{\substack{k=1 \\ k \ne i}}^{|\supp|}  {\mb{W}_{kj}^*}^2|})
	\end{split}
	\end{align}
	
\end{proof}

\subsection{Proof of Lemma \ref{lem:bound sup xe}}
\paragraph{Lemma \ref{lem:bound sup xe}}[Bound on \emph{$\|   \hE{\res{\mb{x}_{\minus i}}{\supp\cdot} {\mb{e}_i}^\T} \|_{\infty, 2} $}]
\emph{	For some $\epsilon_4 > 0$ and $\epsilon_5 < 8 \sqrt{k} \sigma^2$,
	\begin{align}
	\begin{split}
	\prob(\| \hE{\res{\mb{x}_{\minus i}}{\supp\cdot} {\mb{e}_i}^\T}\|_{\infty, 2} \geq \epsilon_4 + \epsilon_5) \leq 
	\exp(- \frac{\epsilon_4^2 T}{2 k \sigma^2 \budget^2 } + \log(2k^2|\supp|)) + \exp(- \frac{\epsilon_5^2 T}{64 \sqrt{k} \sigma^2 } + \log(2k^2|\supp|))
	\end{split}
	\end{align}}
\begin{proof}
	Note that $\res{\mb{x}_{\minus i}}{\supp\cdot} = \res{\mb{x}_{\minus i}^*}{\supp\cdot} + \res{\mb{e}_{\minus i}}{\supp\cdot} $. Thus,
	
	\begin{align}
	\begin{split}
	&\|   \hE{\res{\mb{x}_{\minus i}}{\supp\cdot} {\mb{e}_i}^\T} \|_{\infty, 2} = \|   \hE{(\res{\mb{x}_{\minus i}^*}{\supp\cdot} + \res{\mb{e}_{\minus i}}{\supp\cdot}) {\mb{e}_i}^\T} \|_{\infty, 2}\\
	&\leq \|   \hE{(\res{\mb{x}_{\minus i}^*}{\supp\cdot} {\mb{e}_i}^\T}\|_{\infty, 2} + \| \hE{\res{\mb{e}_{\minus i}}{\supp\cdot}) {\mb{e}_i}^\T} \|_{\infty, 2}
	\end{split}
	\end{align}
	
	We will bound both the terms separately. 
	\paragraph{Bound on $\| \hE{\res{\mb{x}_{\minus i}^*}{\supp\cdot} {\mb{e}_i}^\T}\|_{\infty, 2}$.}
	For simplicity, let $\res{\mb{x}_{\minus i}^*}{\supp\cdot} = \mb{y}^* \in \real^{|\supp|k \times 1}$ and $\mb{e}_i = \mb{u} \in \real^k$ and we define a random variable $R$
	
	\begin{align}
	\begin{split}
	R &\triangleq \hE{\mb{y}^* \mb{u}^\T}_{ij} \\
	&= \hE{\mb{y}^*_{i1} \mb{u}_j} \\
	&= \frac{1}{T} \sum_{t=1}^T ({\mb{y}_{i1}^*}^t \mb{u}_j^t) \\
	&= \sum_{t=1}^T R^t
	\end{split}
	\end{align}
	Then for a given $\mb{y}_{i1}^t$, random variable $R^t$ is a sub-Gaussian random variable with $0$ mean and parameter $\frac{{\mb{y}_{i1}^t}^2 \sigma^2}{T^2}$. Correspondingly $R$ is a sub-Gaussian random variable with $0$ mean and parameter $\sigma^2 \frac{\sum_{t=1}^T {\mb{y}_{i1}^t}^2}{T^2}$. Using the tail bound for the sub-Gaussian variable for some $\epsilon_4 > 0$, we can write
	\begin{align}
	\begin{split}
	\prob_{.\mid \mb{y}_{i1}^t}(\mid R \mid > \epsilon_4) &\leq 2 \exp( - \frac{\epsilon_4^2}{2 \sigma^2 \frac{\sum_{t=1}^T {\mb{y}_{i1}^t}^2}{T^2} }) \\
	&\leq 2 \exp(- \frac{\epsilon_4^2 T}{2 \sigma^2 \budget^2 })
	\end{split}
	\end{align}   
	where last inequality follows by noting that ${\mb{y}_{i1}^t}^2 \leq \budget^2$. Thus,
	\begin{align}
	\begin{split}
	\prob(\mid \hE{\mb{y}^* \mb{u}^\T}_{ij} \mid > \epsilon_4) &= \texttt{E}_{\mb{y}_{i1}^t}\big( \prob_{.\mid \mb{y}_{i1}^t}(\mid \hE{\mb{y}^* \mb{u}^\T}_{ij} \mid > \epsilon_4) \big) \\
	&\leq 2 \exp(- \frac{\epsilon_4^2 T}{2 \sigma^2 \budget^2 })
	\end{split}
	\end{align}
	
	Now,
	\begin{align}
	\begin{split}
	\| \hE{\mb{y}^* \mb{u}^\T}\|_{\infty, 2} = \max_{l \in \supp} \max_{i \in \ind(l)} \| \hE{\mb{y}^*_{l1} \mb{u}^\T} \|_2
	\end{split}
	\end{align}
	Taking union bound across $i \in \ind(l), \forall l\in \supp$ and $j \in \seq{k}$, we get
	
	\begin{align}
	\begin{split}
	\prob(\| \hE{\res{\mb{x}_{\minus i}^*}{\supp\cdot} {\mb{e}_i}^\T}\|_{\infty, 2} &\geq \epsilon_4) \leq 2 k^2 \mid \supp \mid \exp(- \frac{\epsilon_4^2 T}{2 k \sigma^2 \budget^2 }) \\
	&= \exp(- \frac{\epsilon_4^2 T}{2 k \sigma^2 \budget^2 } + \log(2k^2|\supp|))
	\end{split}
	\end{align}
	
	\paragraph{Bound on $\| \hE{\res{\mb{e}_{\minus i}}{\supp\cdot} {\mb{e}_i}^\T}\|_{\infty, 2}$.}
	Again for simplicity, let $\res{\mb{e}_{\minus i}}{\supp\cdot} = \mb{v} \in \real^{|\supp|k \times 1}$ and $\mb{e}_i = \mb{u} \in \real^k$ and we define a random variable $R$
	\begin{align}
	\begin{split}
	R &\triangleq \hE{\mb{v} \mb{u}^\T}_{ij} \\
	&= \frac{1}{T} \sum_{t=1}^T \mb{v}_{i1}^t \mb{u}_j^t 
	\end{split}
	\end{align}
	Note that $\mb{v}_{i1}^t$ and $\mb{u}_j^t$ are independent sub-Gaussian random variables with $0$ mean and $\sigma^2$ parameter. We will use Lemma \ref{lem:subexponential} to get a tail bound on the random variable $R$. 

	Now, $\E{R} = 0$ and for some $\epsilon_5 > 0$,
	
	\begin{align}
	\begin{split}
	\prob(\mid \frac{1}{T} \sum_{t=1}^T \mb{v}_{i1}^t \mb{u}_j^t  \mid > \epsilon_5) = \prob(\sigma^ 2\mid \frac{1}{T} \sum_{t=1}^T \frac{\mb{v}_{i1}^t}{\sigma} \frac{\mb{u}_j^t}{\sigma}  \mid > \epsilon_5)
	\end{split}
	\end{align}
	Here $\frac{\mb{v}_{i1}^t}{\sigma}$ and $\frac{\mb{u}_j^t}{\sigma}$ are sub-Gaussian random variables with parameter $1$. Thus using result from Lemma \ref{lem:subexponential}, we can use a Bernstein tail bound for the sum of sub-exponential random variables and write,
	\begin{align}
	\begin{split}
	\prob(\mid \frac{1}{T} \sum_{t=1}^T \mb{v}_{i1}^t \mb{u}_j^t  \mid > \epsilon_5) \leq 2 \exp(- \frac{T \epsilon_5^2}{64 \sigma^4}), \forall \epsilon_5 < 8 \sigma^2  
	\end{split}
	\end{align}
	Again, taking union bound across $i \in \ind(l), \forall l\in \supp$ and $j \in \seq{k}$, for all $\epsilon_5 < 8\sqrt{k} \sigma^2$ we get
	\begin{align}
	\begin{split}
	\prob(\| \hE{\res{\mb{e}_{\minus i}}{\supp\cdot} {\mb{e}_i}^\T}\|_{\infty, 2} &\geq \epsilon_5) \leq 2 k^2 \mid \supp \mid \exp(- \frac{\epsilon_5^2 T}{64 \sqrt{k} \sigma^4 }) \\
	&= \exp(- \frac{\epsilon_5^2 T}{64 \sqrt{k} \sigma^4 } + \log(2k^2|\supp|))
	\end{split}
	\end{align}
\end{proof}

\subsection{Proof of Lemma \ref{lem:bound x-ie-i}}
\paragraph{Lemma \ref{lem:bound x-ie-i}}[Bound on \emph{$\| \hE{\res{\mb{x}_{\minus i} {\mb{e}_{\minus i}}^\T}{\supp^c \supp}} \res{\mb{W}_{i.}^*}{\supp\cdot} \|_{\BIF}$}]
	\emph{For some $\epsilon_6 > 0, 0 <\epsilon_7 < 8$ and $ \epsilon_8 <  8 \frac{\sqrt{|\supp^c|} \max_{ij} |\mb{W}_{ij}^*|}{\min_{ij}|\mb{W}_{ij}^*|}$,
		\begin{align}
		\begin{split}
		&\prob(\| \res{\hE{\mb{x}_{\minus i} {\mb{e}_{\minus i}}^\T   }}{\supp^c \supp} \res{\mb{W}_{i.}^*}{\supp\cdot} \|_{\BIF} > \epsilon_6 + \sigma^2 (1 + \epsilon_7 + \epsilon_8 ) \max_{l \in \supp^c} \sqrt{\sum_{i \in \ind(l)}\sum_{j=1}^{k} |\mb{W}_{ij}^*|^2} ) \\
		&\leq\exp(-\frac{\epsilon_6^2 T}{2k^2 \sigma^2 \budget^2 \max_{j}\sum_{k=1}^{|\supp^c|} {\mb{W}_{kj}^*}^2 } + \log(2 k^2 |\supp^c|) ) +
		k^2 |\supp^c| \exp(-\frac{T\epsilon_7^2}{64}) +\\
		& \sum_{i \in \ind(l), l \in \supp^c} \sum_{j=1}^{k} \exp(-\frac{T\epsilon_8^2 |\mb{W}_{ij}^*|}{64 |\sqrt{\sum_{\substack{k=1 \\ k \ne i}}^{|\supp^c|}  {\mb{W}_{kj}^*}^2}|})
		\end{split}
		\end{align}}
\begin{proof}
	Note that,
	\begin{align}
	\begin{split}
	&\| \hE{\res{\mb{x}_{\minus i}^t {\mb{e}_{\minus i}^t}^\T}{\supp^c \supp}} \res{\mb{W}_{i.}^*}{\supp\cdot} \|_{\BIF} \leq \| \res{\hE{\mb{x}_{\minus i}^* {\mb{e}_{\minus i}}^\T   }}{\supp^c \supp} \res{\mb{W}_{i.}^*}{\supp\cdot} \|_{\BIF} \\
	& +  \| \res{\hE{\mb{e}_{\minus i} {\mb{e}_{\minus i}}^\T  }}{\supp^c \supp} \res{\mb{W}_{i.}^*}{\supp\cdot} \|_{\BIF} 
	\end{split}
	\end{align}
	We can follow the exact same argument of Lemma \ref{lem: xew support} to bound the above two terms until we take the union bound. This time we will take union bound across $i \in \supp^c$ and $j \in \seq{k \times k}$, we get
	\begin{align}
	\begin{split}
	&\prob(\| \res{\hE{\mb{x}_{\minus i}^* {\mb{e}_{\minus i}}^\T   }}{\supp^c \supp} \res{\mb{W}_{i.}^*}{\supp\cdot} \|_{\BIF} > \epsilon_6) \\
	&\leq 2 k^2 |\supp^c| \exp(-\frac{\epsilon_6^2 T}{2k^2 \sigma^2 \budget^2 \max_{j}\sum_{k=1}^{|\supp^c|} {\mb{W}_{kj}^*}^2 }) \\
	&= \exp(-\frac{\epsilon_6^2 T}{2k^2 \sigma^2 \budget^2 \max_{j}\sum_{k=1}^{|\supp^c|} {\mb{W}_{kj}^*}^2 } + \log(2 k^2 |\supp^c|) )
	\end{split}
	\end{align}
	and
	\begin{align}
	\begin{split}
	&\prob(\| \res{\hE{\mb{e}_{\minus i} {\mb{e}_{\minus i}}^\T   }}{\supp^c \supp} \res{\mb{W}_{i.}^*}{\supp\cdot} \|_{\BIF} > \sigma^2 (1 + \epsilon_7 + \epsilon_8 ) \max_{l \in \supp^c} \sqrt{\sum_{i \in \ind(l)}\sum_{j=1}^{k} |\mb{W}_{ij}^*|^2} ) \\
	&\leq k^2 |\supp^c| \exp(-\frac{T\epsilon_7^2}{64}) +\sum_{i \in \ind(l), l \in \supp^c} \sum_{j=1}^{k} \exp(-\frac{T\epsilon_8^2 |\mb{W}_{ij}^*|}{64 |\sqrt{\sum_{\substack{k=1 \\ k \ne i}}^{|\supp^c|}  {\mb{W}_{kj}^*}^2}|})
	\end{split}
	\end{align}
	for some $\epsilon_6, \epsilon_7$ and $ \epsilon_8 <  2 \frac{\sqrt{|\supp^c|} \max_{ij} |\mb{W}_{ij}^*|}{\min_{ij}|\mb{W}_{ij}^*|}$.
\end{proof}

\subsection{Proof of Lemma \ref{lem:bound x-ie_i}}
\paragraph{Lemma \ref{lem:bound x-ie_i}}[Bound on \emph{$\| \hE{\res{\mb{x}_{\minus i}}{\supp^c\cdot} {\mb{e}_i}^\T} \|_{\BIF} $}]
\emph{For some $\epsilon_9 > 0$ and $\epsilon_{10} < 8 k \sigma^2$,
	\begin{align}
	\begin{split}
	&\prob(\| \hE{\res{\mb{x}_{\minus i}}{\supp^c\cdot} {\mb{e}_i}^\T}\|_{\BIF} \geq \epsilon_9 + \epsilon_{10}) \leq \exp(- \frac{\epsilon_9^2 T}{2 k^2 \sigma^2 \budget^2 } + \log(2k^2|\supp^c|)) + \exp(- \frac{\epsilon_{10}^2 T}{64 k \sigma^2 } + \log(2k^2|\supp^c|))
	\end{split}
	\end{align}}
\begin{proof}
	Again note that $\res{\mb{x}_{\minus i}}{\supp^c\cdot} = \res{\mb{x}_{\minus i}^*}{\supp^c\cdot} + \res{\mb{e}_{\minus i}}{\supp^c\cdot} $. Thus,
	
	\begin{align}
	\begin{split}
	&\|   \hE{\res{\mb{x}_{\minus i}}{\supp^c\cdot} {\mb{e}_i}^\T} \|_{\BIF} = \|   \hE{(\res{\mb{x}_{\minus i}^*}{\supp^c\cdot} + \res{\mb{e}_{\minus i}}{\supp^c\cdot}) {\mb{e}_i}^\T} \|_{\BIF}\\
	&\leq \|   \hE{(\res{\mb{x}_{\minus i}^*}{\supp^c\cdot} {\mb{e}_i}^\T}\|_{\BIF} + \| \hE{\res{\mb{e}_{\minus i}}{\supp^c\cdot}) {\mb{e}_i}^\T} \|_{\BIF}
	\end{split}
	\end{align}
	
	Like in Lemma \ref{lem:bound sup xe}, we can bound both the terms separately using similar arguments. The only change would be that this time we will take union bound across $i \in \supp^c$ and $j \in \seq{k \times k}$, we get
	
	\begin{align}
	\begin{split}
	\prob(\| \hE{\res{\mb{x}_{\minus i}^*}{\supp^c\cdot} {\mb{e}_i}^\T}\|_{\BIF} &\geq \epsilon_9) \leq 2 k^2 \mid \supp^c \mid \exp(- \frac{\epsilon_9^2 T}{2 k^2 \sigma^2 \budget^2 }) \\
	&= \exp(- \frac{\epsilon_9^2 T}{2 k^2 \sigma^2 \budget^2 } + \log(2k^2|\supp^c|))
	\end{split}
	\end{align}
	and similarly for $\epsilon_{10} < 2 k \sigma^2$,
	\begin{align}
	\begin{split}
	\prob(\| \hE{\res{\mb{e}_{\minus i}}{\supp^c\cdot} {\mb{e}_i}^\T}\|_{\BIF} &\geq \epsilon_{10}) \leq 2 k^2 \mid \supp^c \mid \exp(- \frac{\epsilon_{10}^2 T}{64 k \sigma^2 }) \\
	&= \exp(- \frac{\epsilon_{10}^2 T}{64 k \sigma^2 } + \log(2k^2|\supp^c|))
	\end{split}
	\end{align}
\end{proof}

\section{Proofs of Auxiliary Lemmas}
\subsection{Subexponentiality of square of sub-Gaussian random variables}
\begin{lemma}
	\label{lem:y^2 is subexponential} 
	If $y$ is a sub-Gaussian random variable with $0$ mean and parameter $1$, then $y^2$ is a sub-exponential random variable with parameters $(4\sqrt{2}, 4)$.
\end{lemma}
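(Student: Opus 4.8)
The plan is to control the moment generating function (MGF) of the centered variable $y^2 - \mu$, where $\mu \triangleq \mathbb{E}[y^2] = \operatorname{Var}(y) \in [0,1]$, and match it against the sub-exponential template $\mathbb{E}[e^{\lambda(y^2 - \mu)}] \le e^{\nu^2\lambda^2/2}$ valid for $|\lambda| < 1/b$, with the target parameters $\nu = 4\sqrt{2}$ and $b = 4$. The natural engine is a moment bound: since $y$ is sub-Gaussian with parameter $1$ it obeys the tail bound $\prob(|y| > t) \le 2e^{-t^2/2}$, and integrating the layer-cake identity $\mathbb{E}[y^{2k}] = \int_0^\infty 2k\,t^{2k-1}\,\prob(|y|>t)\,dt$ against this tail (substituting $u = t^2/2$ and recognizing a Gamma integral) yields the clean bound $\mathbb{E}[y^{2k}] \le 2^{k+1}k!$ for every integer $k \ge 1$.

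First I would convert these raw moments into a Bernstein-type bound on the central moments. Using the $L^k$-triangle inequality $\|y^2 - \mu\|_k \le \|y^2\|_k + \mu \le 2\|y^2\|_k$ (the last step because $\mu = \|y^2\|_1 \le \|y^2\|_k$), the moment bound above gives $\mathbb{E}[|y^2 - \mu|^k] \le \frac{1}{2}k!\,\nu^2 b^{k-2}$ for suitable explicit constants $\nu,b$. Then I would expand the MGF as the Taylor series $\mathbb{E}[e^{\lambda(y^2-\mu)}] = 1 + \sum_{k \ge 2}\frac{\lambda^k}{k!}\mathbb{E}[(y^2-\mu)^k]$, noting that the linear term vanishes because $y^2 - \mu$ is centered, bound it termwise by the Bernstein estimate, and sum the resulting geometric series $\frac{\nu^2\lambda^2/2}{1 - b|\lambda|}$. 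Restricting $|\lambda| < 1/b$ and using $1 + x \le e^x$ then delivers the sub-exponential MGF bound with parameters $(\nu, b) = (4\sqrt{2}, 4)$, handling both signs of $\lambda$ uniformly since the estimate is stated in terms of $|\lambda|$.

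A cleaner but less directly usable alternative for the first step is the decoupling identity $e^{\lambda y^2} = \mathbb{E}_g[e^{\sqrt{2\lambda}\,y g}]$ with $g \sim \normal(0,1)$ independent; applying Fubini and the sub-Gaussian MGF bound $\mathbb{E}_y[e^{sy}] \le e^{s^2/2}$ gives $\mathbb{E}[e^{\lambda y^2}] \le (1-2\lambda)^{-1/2}$ for $0 \le \lambda < 1/2$. This is attractive, but because $\mu$ may be strictly less than $1$ for a general sub-Gaussian variable, the induced bound $e^{-\lambda\mu}(1-2\lambda)^{-1/2}$ on the centered MGF has a nonzero slope at $\lambda = 0$ and therefore cannot be absorbed into $e^{\nu^2\lambda^2/2}$ near the origin; the moment route avoids this because it works with the centered moments from the start.

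I expect the main obstacle to be purely the constant bookkeeping: the centering step costs a factor of $2^k$, and threading the tail constant, this centering factor, the geometric-series radius, and the final $|\lambda| < 1/b$ restriction so that they collapse exactly onto $\nu = 4\sqrt{2}$ and $b = 4$ (rather than some looser pair) requires care. No single step is conceptually hard, but the bound is only as sharp as the weakest of these estimates, so the delicate part is choosing the tail constant and the centering inequality tightly enough.
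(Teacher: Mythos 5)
Your proposal is correct in substance and follows essentially the same route as the paper: bound the even moments of $y$, expand the moment generating function of the centered variable $y^2-\mu$ as a power series, bound it termwise, and sum the resulting geometric series for $|\lambda|$ below the inverse scale parameter. The only real differences are cosmetic or in the bookkeeping. You derive $\mathbb{E}[y^{2k}]\le 2^{k+1}k!$ from the tail bound via the layer-cake formula, while the paper quotes the equivalent moment bound $\mathbb{E}[|y|^{r}]\le r\,2^{r/2}\Gamma(r/2)$ directly; these coincide. Where you diverge is the centering step, and here your version is actually the more honest one: paying the $L^k$ triangle-inequality cost $\|y^2-\mu\|_k\le 2\|y^2\|_k$ gives $\mathbb{E}[|y^2-\mu|^k]\le 2\cdot 4^k k!$, and summing the series then lands on a larger variance proxy (roughly $\nu=8$ with $b=4$, or $\nu^2\lambda^2$ in place of $\nu^2\lambda^2/2$), not exactly $(4\sqrt{2},4)$. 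The paper reaches $(4\sqrt{2},4)$ only by bounding $\mathbb{E}[(y^2-\mu)^r]\le\mathbb{E}[|y|^{2r}]$ without paying any centering cost, a step that is not justified as written (one can show $\mathbb{E}[|y^2-\mu|^r]\le 2\,\mathbb{E}[y^{2r}]$, but not the factor-free version). So the "delicate constant bookkeeping" you flagged is precisely where the stated constants cannot be recovered rigorously by either argument; this only rescales the constants in the downstream Bernstein bounds and does not affect the asymptotics. You also correctly diagnose why the cleaner decoupling identity $\mathbb{E}[e^{\lambda y^2}]\le(1-2\lambda)^{-1/2}$ fails here: the uncentered bound has nonzero slope at $\lambda=0$ when $\mu<1$.
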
	
\begin{proof}
	Since $y$ is a $0$ mean sub-Gaussian random variable with parameter $1$, we can write
	\begin{align*}
	(\forall \lambda \in \real) \E{\exp(\lambda y)} \leq \exp(\frac{\lambda^2}{2}) 
	\end{align*}
	Let $\Gamma(r)$ be the Gamma function, then moments of the sub-Gaussian variable $y$ are bounded as follows:
	\begin{align*}
	(\forall r \geq 0) \E{|y|^r} \leq r 2^{\frac{r}{2}}\Gamma(\frac{r}{2})
	\end{align*}
	Let $v \triangleq y^2$ and $\mu_v \triangleq \E{v}$. Using power series expansion and noting that $\Gamma(r) = (r-1)!$ for an integer $r$, we have:
	\begin{align*}
	\begin{split}
	\E{\exp(\lambda (v - \mu_v))} &= 1 + \lambda \E{v - \mu_v} + \sum_{r=2}^\infty \frac{\lambda^r \E{(v - \mu_v)^r}}{r!} \\
	&\leq 1 + \sum_{r=2}^\infty \frac{\lambda^r \E{|y|^{2r}}}{r!} \\
	&\leq 1 + \sum_{r=2}^\infty \frac{\lambda^r 2r 2^{r}\Gamma(r)}{r!} \\
	&= 1 + \sum_{r=2}^\infty \lambda^r 2^{r+1}\\
	&= 1 + \frac{8 \lambda^2}{1 - 2 \lambda}
	\end{split}
	\end{align*} 
	We take $\lambda \leq \frac{1}{4}$. Thus,
	\begin{align*}
	\begin{split}
	\E{\exp(\lambda (v - \mu_v))} &\leq 1 + 16\lambda^2 \\
	&\leq \exp(16 \lambda^2) \\
	&\leq \exp(\frac{(4\sqrt{2})^2\lambda^2}{2})
	\end{split}
	\end{align*}
	It follows that $v = y^2$ is a subexponential random variable with parameters $(4\sqrt{2}, 4)$. 
\end{proof}

\subsection{Subexponentiality of product of independent sub-Gaussian random variables}
\begin{lemma}
	\label{lem:subexponential}
	Let $p$ and $q$ be two independent sub-Gaussian random variables with $0$ mean and parameter $1$, then $pq$ is a sub-exponential random variable with parameters $(4\sqrt{2},4)$.
\end{lemma}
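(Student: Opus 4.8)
The plan is to reduce the product $pq$ to the single-variable square already handled in Lemma~\ref{lem:y^2 is subexponential}, by conditioning on one of the two factors. First I would record that $\mathbb{E}[pq] = \mathbb{E}[p]\,\mathbb{E}[q] = 0$ by independence, so the mean of $pq$ is zero and it suffices to exhibit constants $(\nu,b) = (4\sqrt{2},4)$ for which $\mathbb{E}[e^{\lambda pq}] \le e^{\nu^2\lambda^2/2} = e^{16\lambda^2}$ holds for all $|\lambda| \le 1/4$. I deliberately avoid the polarization identity $pq = \tfrac14[(p+q)^2-(p-q)^2]$: for general (non-Gaussian) sub-Gaussian $p,q$ the variables $p+q$ and $p-q$ are not independent, so the moment generating function does not factor, and a Cauchy--Schwarz split reintroduces an unwanted first-order term that destroys the centering.

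The key step is to condition on $q$. Since $p$ is sub-Gaussian with parameter $1$ and independent of $q$, applying the sub-Gaussian bound $\mathbb{E}[e^{t p}] \le e^{t^2/2}$ with $t = \lambda q$ gives $\mathbb{E}[e^{\lambda pq}\mid q] \le e^{\lambda^2 q^2/2}$, valid for every $\lambda \in \real$ (so the argument is symmetric in the sign of $\lambda$). Taking the expectation over $q$ yields $\mathbb{E}[e^{\lambda pq}] \le \mathbb{E}[e^{(\lambda^2/2)q^2}]$, which is precisely the moment generating function of $q^2$ evaluated at $s = \lambda^2/2 \ge 0$.

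I would then invoke Lemma~\ref{lem:y^2 is subexponential}: because $q$ is sub-Gaussian with parameter $1$, $q^2$ is sub-exponential with parameters $(4\sqrt{2},4)$, so with $\mu \triangleq \mathbb{E}[q^2] \le 1$ (the variance proxy bounds the centered second moment) one has $\mathbb{E}[e^{s q^2}] = e^{s\mu}\,\mathbb{E}[e^{s(q^2-\mu)}] \le e^{s + 16 s^2}$ for $0 \le s < 1/4$. Substituting $s = \lambda^2/2$, which satisfies $s < 1/32 < 1/4$ whenever $|\lambda| < 1/4$, gives $\mathbb{E}[e^{\lambda pq}] \le e^{\lambda^2/2 + 4\lambda^4}$, and bounding $4\lambda^4 \le \lambda^2/4$ on $|\lambda|\le 1/4$ produces $\mathbb{E}[e^{\lambda pq}] \le e^{3\lambda^2/4} \le e^{16\lambda^2} = e^{(4\sqrt{2})^2\lambda^2/2}$, the required sub-exponential bound with parameters $(4\sqrt{2},4)$.

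The computation leaves substantial slack (the true exponent is $3\lambda^2/4$ rather than $16\lambda^2$), so numerically the bound is comfortable; the only genuinely delicate point, and the step I expect to require care, is the linear-in-$s$ contribution $e^{s\mu}$ arising from the nonzero mean of $q^2$. A naive application of the sub-exponential bound directly to $e^{s q^2}$ without first peeling off $\mathbb{E}[q^2]$ would leave a stray first-order term; keeping $\mu \le 1$ explicit and absorbing it into the quadratic estimate on the restricted range $|\lambda|\le 1/4$ is exactly what makes the stated parameters emerge cleanly. The remaining work is the routine bookkeeping of checking that $s = \lambda^2/2$ stays inside the validity region of Lemma~\ref{lem:y^2 is subexponential}.
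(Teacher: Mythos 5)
Your proof is correct, but it takes a genuinely different route from the paper's. The paper proves the lemma by a direct power-series expansion of $\mathbb{E}\big(\exp(\lambda pq)\big)$: it factors the moments $\mathbb{E}\big(|p|^{r}|q|^{r}\big)=\mathbb{E}\big(|p|^{r}\big)\mathbb{E}\big(|q|^{r}\big)$ by independence, inserts the sub-Gaussian moment bound $\mathbb{E}\big(|y|^{r}\big)\le r\,2^{r/2}\Gamma(r/2)$ for each factor, uses $\Gamma(r/2)^{2}\le\Gamma(r)$, and sums the resulting series in closed form before bounding it by $\exp(16\lambda^{2})$ on $|\lambda|\le 1/4$; it never invokes Lemma~\ref{lem:y^2 is subexponential}. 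You instead exploit independence once, at the level of the moment generating function, by conditioning on $q$ to get $\mathbb{E}\big(e^{\lambda pq}\mid q\big)\le e^{\lambda^{2}q^{2}/2}$, which reduces the problem to the MGF of $q^{2}$ and lets you recycle Lemma~\ref{lem:y^2 is subexponential} wholesale. Your reduction is shorter, avoids repeating the series manipulation (which the paper essentially performs twice, once in each auxiliary lemma), and your handling of the uncentered term $e^{s\mu}$ with $\mu=\mathbb{E}\big(q^{2}\big)\le 1$ is exactly the point that needs care and is done correctly; the range check $s=\lambda^{2}/2\le 1/32<1/4$ is also right. The payoff is a strictly tighter intermediate bound ($e^{3\lambda^{2}/4}$ versus the $e^{16\lambda^{2}}$ the paper settles for), at the cost of making the result depend on Lemma~\ref{lem:y^2 is subexponential} rather than being self-contained. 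One cosmetic remark: the paper's own proof ends by asserting parameters $(4\sqrt{2},2)$, which is a typo for the $(4\sqrt{2},4)$ in the statement; your version lands on the stated parameters directly.
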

\begin{proof}
	Since $p$ and $q$ are both  $0$ mean sub-Gaussian random variable with parameter $1$, we can write
	\begin{align*}
	\begin{split}
	(\forall \lambda \in \real) \E{\exp(\lambda p)} &\leq \exp(\frac{\lambda^2}{2}) \\
	(\forall \lambda \in \real) \E{\exp(\lambda q)} &\leq \exp(\frac{\lambda^2}{2}) 
	\end{split}
	\end{align*}
	Let $\Gamma(r)$ be the Gamma function, then moments of the sub-Gaussian variable $p$ and $q$ are bounded as follows:
	\begin{align*}
	(\forall r \geq 0) \E{|p|^r} \leq r 2^{\frac{r}{2}}\Gamma(\frac{r}{2}) \\
	(\forall r \geq 0) \E{|q|^r} \leq r 2^{\frac{r}{2}}\Gamma(\frac{r}{2})
	\end{align*}
	Let $v \triangleq pq$. Note that $\E{v} = \E{pq} = \E{p}\E{q} = 0 $ due to independence. Using power series expansion and noting that $\Gamma(r) = (r-1)!$ for an integer $r$, we have:
	\begin{align*}
	\begin{split}
	\E{\exp(\lambda v )} &= 1 + \lambda \E{v} + \sum_{r=2}^\infty \frac{\lambda^r \E{v^r}}{r!} \\
	&\leq 1 + \sum_{r=2}^\infty \frac{\lambda^r \E{|p|^{r} |q|^{r}}}{r!} \\
	&\leq 1 + \sum_{r=2}^\infty \frac{\lambda^r \E{|p|^{r}} \E{|q|^{r}}}{r!} \\
	&\leq 1 + \sum_{r=2}^\infty \frac{\lambda^r r^2 2^r\Gamma(\frac{r}{2})^2}{r!}
	\end{split}
	\end{align*} 
	Note that $\Gamma(\frac{r}{2})^2 \leq \Gamma(r)$. Thus,
	\begin{align*}
	\begin{split}
	\E{\exp(\lambda v )} &\leq 1 + \sum_{r=2}^\infty \frac{\lambda^r r^2 2^r\Gamma(r)}{r!} \\
	&= 1 - \frac{8 (\lambda - 1) \lambda^2 }{(1 - 2 \lambda)^2} \\
	&\leq \exp(16 \lambda^2) \\
	&\leq \exp(\frac{(4\sqrt{2})^2}{2} \lambda^2)
	\end{split}
	\end{align*} 
	where last inequality holds for $|\lambda| \leq \frac{1}{4}$. Thus, $pq$ is subexponential with parameters $ (4\sqrt{2}, 2)$.
\end{proof}

\subsection{Norm Inequalities}
\label{sec:norm inequalities}
Here we will derive some norm inequalities which we will use in our proofs.
\begin{lemma}[Norm Inequalities]
	\label{lem:norminequalities}
	Let $\mb{A}$ be a row-partitioned block matrix which consists of $p$ blocks where block $\mb{A}_i \in \real^{m_i \times n}, \ \forall i \in \seq{p} $ and $\mb{B} \in \real^{n \times o}$. Then the following inequalities hold:
	\begin{align*}
	\|\mb{A} \mb{B} \|_{\BIF} \leq  \| \mb{A}\|_{\BIO} \| \mb{B}\|_{\infty,2}
	\end{align*} 
	\begin{align*}
	\|\mb{A} \mb{B} \|_{\BIO} \leq  \| \mb{A}\|_{\BIO} \| \mb{B}\|_{\infty,\infty}
	\end{align*} 
\end{lemma}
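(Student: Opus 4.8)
The plan is to prove both inequalities by exploiting the fact that left-multiplication by a row-partitioned block matrix acts block-by-block, and then reducing each block bound to an elementary row-wise estimate. First I would observe that if $\mb{A} = [\mb{A}_1 \cdots \mb{A}_p]^\T$ is stacked vertically, then $\mb{A}\mb{B}$ carries the identical row partition with $i$-th block $\mb{A}_i \mb{B} \in \real^{m_i \times o}$; consequently $\|\mb{A}\mb{B}\|_{\BIF} = \max_i \|\mb{A}_i\mb{B}\|_F$ and $\|\mb{A}\mb{B}\|_{\BIO} = \max_i \sum_{l,c}|[\mb{A}_i\mb{B}]_{lc}|$, so it suffices to bound a single generic block and then take the maximum over $i \in \seq{p}$. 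The workhorse identity is that the $l$-th row of $\mb{A}_i\mb{B}$ equals $\sum_{r=1}^n [\mb{A}_i]_{lr}\,\mb{B}_{r\cdot}$, a linear combination of the rows of $\mb{B}$ weighted by the $l$-th row of $\mb{A}_i$.

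For the first inequality, I would apply the triangle inequality for the Euclidean norm to this row decomposition, giving $\|[\mb{A}_i\mb{B}]_{l\cdot}\|_2 \le \sum_r |[\mb{A}_i]_{lr}|\,\|\mb{B}_{r\cdot}\|_2 \le \big(\sum_r |[\mb{A}_i]_{lr}|\big)\|\mb{B}\|_{\infty,2}$, where $\|\mb{B}\|_{\infty,2} = \max_r \|\mb{B}_{r\cdot}\|_2$ is the maximal row-$\ell_2$ norm. Squaring, summing over the rows $l$ of block $i$, and using the elementary bound $\sum_l a_l^2 \le (\sum_l a_l)^2$ valid for nonnegative $a_l \triangleq \sum_r |[\mb{A}_i]_{lr}|$, yields $\|\mb{A}_i\mb{B}\|_F \le \big(\sum_{l,r}|[\mb{A}_i]_{lr}|\big)\|\mb{B}\|_{\infty,2}$. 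The inner double sum is exactly the quantity maximized in the definition of $\|\mb{A}\|_{\BIO}$, so taking $\max_i$ on both sides gives $\|\mb{A}\mb{B}\|_{\BIF} \le \|\mb{A}\|_{\BIO}\|\mb{B}\|_{\infty,2}$.

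For the second inequality I would proceed analogously but with the $\ell_1$ norm on rows. Working entrywise, $\sum_c |[\mb{A}_i\mb{B}]_{lc}| \le \sum_c \sum_r |[\mb{A}_i]_{lr}|\,|\mb{B}_{rc}| = \sum_r |[\mb{A}_i]_{lr}|\,\|\mb{B}_{r\cdot}\|_1 \le \big(\sum_r |[\mb{A}_i]_{lr}|\big)\|\mb{B}\|_{\infty,\infty}$, since $\|\mb{B}\|_{\infty,\infty} = \max_r \|\mb{B}_{r\cdot}\|_1$ is precisely the maximal absolute row sum. Summing over $l$ and taking $\max_i$ again identifies the block sum $\sum_{l,r}|[\mb{A}_i]_{lr}|$ with $\|\mb{A}\|_{\BIO}$, delivering $\|\mb{A}\mb{B}\|_{\BIO} \le \|\mb{A}\|_{\BIO}\|\mb{B}\|_{\infty,\infty}$.

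The calculation is essentially routine, so there is no deep obstacle; the only point requiring care is the conversion, in the Frobenius case, from the per-row bound to a bound in terms of $\|\mb{A}\|_{\BIO}$ — one must pass through $\sum_l a_l^2 \le (\sum_l a_l)^2$ rather than naively identifying the bound with $\|\mb{A}_i\|_F$, since the $\BIO$ norm measures the entrywise $\ell_1$ mass of a block rather than its Frobenius norm. Keeping the block bookkeeping consistent (the same row partition for $\mb{A}$ and $\mb{A}\mb{B}$, and recognizing the mixed norms $\|\cdot\|_{\infty,2}$ and $\|\cdot\|_{\infty,\infty}$ as maxima over the rows of $\mb{B}$) is the main thing to get right.
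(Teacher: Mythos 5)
Your proof is correct, but it takes a different route from the paper's. The paper proves both inequalities via a dual (variational) characterization: it writes $\|\f((\mb{A}\mb{B})_i)\|_2 = \max_{\|\f(\mb{Y}_i)\|_2\le 1}\f((\mb{A}\mb{B})_i)^\T\f(\mb{Y}_i)$ (and the analogous $\ell_1$--$\ell_\infty$ dual pairing for the second inequality), decomposes the resulting inner product into the row contributions $(\mb{A}_i)_{l\cdot}\mb{B}(\mb{Y}_i)_{l\cdot}$, and applies H\"older's inequality $|u^\T v|\le\|u\|_1\|v\|_\infty$ to each term before absorbing the constraint on $\mb{Y}_i$ into $\|\mb{B}\|_{\infty,2}$ (resp.\ $\|\mb{B}\|_{\infty,\infty}$). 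You instead work primally: you expand each row of $\mb{A}_i\mb{B}$ as $\sum_r [\mb{A}_i]_{lr}\mb{B}_{r\cdot}$, apply the triangle inequality, and in the Frobenius case pass from the per-row bounds to the block bound via $\sum_l a_l^2\le(\sum_l a_l)^2$ for $a_l\ge 0$. Both arguments are sound and of comparable length; yours is more elementary (no dual formulation, no H\"older), makes explicit the interpretation of $\|\cdot\|_{\infty,2}$ and $\|\cdot\|_{\infty,\infty}$ as maximal row norms (which the paper leaves implicit), and correctly flags the one step where a naive identification would fail --- namely that the $\BIO$ norm is the entrywise $\ell_1$ mass of a block, not its Frobenius norm, so the squared-sum inequality is genuinely needed. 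The paper's dual route has the mild advantage of treating the two inequalities with a single template in which only the constraint set on $\mb{Y}_i$ changes, but your direct computation is arguably easier to verify.
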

\begin{proof}
	Let $\f(.)$ be an operator which flattens the matrix and converts it to a vector. Let $\mb{Y}$ be a row-partitioned block matrix with same size and block structure as $\mb{A}$. 
	\begin{align*}
	\begin{split}
	\|\mb{A} \mb{B}\|_{\BIF} &= \max_{i \in \seq{p} } \|\f((\mb{A}\mb{B})_i)\|_2 \\
	&= \max_{i \in \seq{p} , \| \f(\mb{Y}_i) \|_2 \leq 1} \f((\mb{A}\mb{B})_i)^\T \f(\mb{Y}_i) \\
	&= \max_{i \in \seq{p} , \| \f(\mb{Y}_i) \|_2 \leq 1} [(\mb{A}_i)_{1.}\mb{B} \dots (\mb{A}_i)_{m_i.}\mb{B}] \f(\mb{Y}_i) \\
	&= \max_{i \in \seq{p} , \| \f(\mb{Y}_i) \|_2 \leq 1} [(\mb{A}_i)_{1.}\mb{B} (\mb{Y}_i)_{1.} + \dots + (\mb{A}_i)_{m_i.}\mb{B} (\mb{Y}_i)_{m_i.}] \\
	&\leq  \max_{\substack{i \in \seq{p} ,\\ \| \f(\mb{Y}_i) \|_2 \leq 1}} \| (\mb{A}_i)_{1.} \|_1 \| \mb{B} (\mb{Y}_i)_{1.} \|_{\infty} + \dots +\| (\mb{A}_i)_{m_i.} \|_1 \| \mb{B} (\mb{Y}_i)_{m_i.} \|_{\infty} \\
	&\leq \| \mb{A}\|_{\BIO} \| \mb{B}\|_{\infty,2}
	\end{split}
	\end{align*} 
	
	We follow a similar procedure for the last norm inequality.
	\begin{align*}
	\begin{split}
	\|\mb{A} \mb{B}\|_{\BIO} &= \max_{i \in \seq{p} } \|\f((\mb{A}\mb{B})_i)\|_1 \\
	&= \max_{i \in \seq{p} , \| \f(\mb{Y}_i) \|_\infty \leq 1} \f((\mb{A}\mb{B})_i)^\T \f(\mb{Y}_i) \\
	&= \max_{i \in \seq{p} , \| \f(\mb{Y}_i) \|_\infty \leq 1} [(\mb{A}_i)_{1.}\mb{B} \dots (\mb{A}_i)_{m_i.}\mb{B}] \f(\mb{Y}_i) \\
	&= \max_{i \in \seq{p} , \| \f(\mb{Y}_i) \|_\infty \leq 1} [(\mb{A}_i)_{1.}\mb{B} (\mb{Y}_i)_{1.} + \dots + (\mb{A}_i)_{m_i.}\mb{B} (\mb{Y}_i)_{m_i.}] \\
	&\leq  \max_{i \in \seq{p} , \| \f(\mb{Y}_i) \|_\infty \leq 1} \| (\mb{A}_i)_{1.} \|_1 \| \mb{B} (\mb{Y}_i)_{1.} \|_{\infty} + \dots \| (\mb{A}_i)_{m_i.} \|_1 \| \mb{B} (\mb{Y}_i)_{m_i.} \|_{\infty} \\
	&\leq \| \mb{A}\|_{\BIO} \| \mb{B}\|_{\infty,\infty}
	\end{split}
	\end{align*} 	
\end{proof}

\end{document}